\documentclass[12pt, draftclsnofoot, onecolumn]{IEEEtran}

\usepackage{cite}
\usepackage{amsmath,amssymb,amsfonts}
\usepackage{algorithmic}
\usepackage{graphicx}
\usepackage{textcomp}
\usepackage{xcolor}
\usepackage{epstopdf}
\usepackage{mathrsfs}
\usepackage{pifont}
\usepackage{multirow}
\usepackage{algorithm}
\usepackage{algorithmic}
\usepackage{amsthm}
\usepackage{bbm}
\newtheorem{lemma}{\textbf{Lemma}}
\usepackage{bm}
\usepackage{subfigure}

\newtheorem{thm}{\textbf{Theorem}}
\usepackage{color}

\hyphenation{op-tical net-works semi-conduc-tor}

\begin{document}
	
	\title{Energy-Efficient URLLC Service Provision via a Near-Space Information Network}
	\author{Puguang An, Peng Yang,~\IEEEmembership{Member,~IEEE}, Xianbin Cao,~\IEEEmembership{Senior Member,~IEEE}, 
	Kun Guo,~\IEEEmembership{Member,~IEEE}, 
	Yue Gao,~\IEEEmembership{Senior Member,~IEEE}, 
 
	and Tony Q. S. Quek,~\IEEEmembership{Fellow,~IEEE}

		\thanks{
		P. An, P. Yang, and X. Cao are with the School of Electronic and Information Engineering, Beihang University, Beijing 100191, China.
K. Guo is with the School of Communication and Electronic Engineering, East China Normal University, Shanghai 200062, China. 
Y. Gao is with the School of Computer Science, Fudan University, Shanghai 200433, China.
T. Q. S. Quek is with the Information Systems Technology and Design, Singapore University of Technology and Design, 487372 Singapore.

		}
	}
	
	
	
	
	
	\maketitle
	
	\begin{abstract}
		The integration of a near-space information network (NSIN) with the reconfigurable intelligent surface (RIS) is envisioned to significantly enhance the communication performance of future wireless communication systems by proactively altering wireless channels. This paper investigates the problem of deploying a RIS-integrated NSIN to provide energy-efficient, ultra-reliable and low-latency communications (URLLC) services. We mathematically formulate this problem as a resource optimization problem, aiming to maximize the effective throughput and minimize the system power consumption, subject to URLLC and physical resource constraints. The formulated problem is challenging in terms of accurate channel estimation, RIS phase alignment, and effective solution design. We propose a joint resource allocation algorithm to handle these challenges. In this algorithm, we develop an accurate channel estimation approach by exploring message passing and optimize phase shifts of RIS reflecting elements to further increase the channel gain. Besides, we derive an analysis-friendly expression of decoding error probability and decompose the problem into two-layered optimization problems by analyzing the monotonicity, which makes the formulated problem analytically tractable. Extensive simulations have been conducted to verify the performance of the proposed algorithm. Simulation results show that the proposed algorithm can achieve outstanding channel estimation performance and is more energy-efficient than diverse benchmark algorithms. 
	\end{abstract}
	

	
	\begin{IEEEkeywords}
		Near-space information network, reconfigurable intelligent surface, electromagnetic channel estimation, URLLC, energy efficiency
	\end{IEEEkeywords}
	
	\maketitle
	
	\section{Introduction}
	%
	%
	%
	%
	%
	%
	%
	%
	%
	%
	%
	
	\IEEEPARstart{T}{he} upcoming sixth generation (6G) communication systems have the unprecedented requirements for ultra-reliability, ultra-low latency, and extremely high data rates that the fifth generation (5G) communication systems cannot well fulfill \cite{DBLP:journals/twc/BrighenteMBMT22}. 
	Ultra-reliable and low-latency communications (URLLC), which is one of the three pillar services for the 5G systems, will still act as a pillar service for the 6G systems. URLLC can empower diverse mission-critical applications including intelligent transportation systems, autonomous driving, telemedicine, industrial automation, Tactile Internet, real-time control in Digital Twins, and Metaverse. 
	
	It's widely considered that space-air-ground integrated communication systems are a crucial system architecture that can satisfy the unprecedented requirements of 6G systems \cite{DBLP:journals/comsur/ZhouSLH23}. A near-space information network (NSIN), composed of various near-space platforms (e.g., airships), medium/low-altitude platforms (e.g., medium/low-altitude unmanned aerial vehicles (UAVs)), is a network system that can acquire, transmit, and process space electromagnetic (EM) signals in real-time. It's the core component of the space-air-ground integrated communication systems. 
	Compared to space-based networks, NSIN is much closer to terrestrial users and can provide high-capacity communication services. Compared with terrestrial networks, NSIN can quickly extend the communication distance and is robust to unexpected events (e.g., diverse natural disasters). Thus, NSIN can provide wide-area and reliable communication coverage. In this regard, NSIN integrates the advantages of space-based networks and terrestrial networks and becomes an important supplement to terrestrial networks and space-based networks. It's envisioned that resorting to NSIN will become a new regime for enabling URLLC \cite{DBLP:journals/comsur/KurtKAIDAYY21,DBLP:journals/tii/WangAS20}. 	
	The authors of this manuscript have also demonstrated the feasibility of delivering reliable and low-latency control-and-non-payload EM signals using one of our manufactured near-space airships during the execution of many projects on airship-supported intelligent transportation systems and meteorological observations over the past few years \cite{Ministry2019Comprehensive}. 
	
	{Except for the mentioned space-air-ground integrated systems,} reconfigurable intelligent surface (RIS) has emerged as a promising paradigm to achieve smart wireless reflection channels for 6G systems. 
	RIS is a planar surface consisting of many passive reflecting elements for EM waves, each of which can separately change the amplitude and/or phase of impinging EM signals. 
	With the deployment of RIS, wireless channels between transmitters and receivers can be flexibly reconfigured to achieve desired realizations and distributions. Thus, RIS provides a new paradigm to fundamentally address the complex EM signal interference and channel fading issues and potentially achieves a significant improvement in terms of throughput and reliability of wireless communication systems \cite{DBLP:journals/twc/PerovicTRF21}. 
	
	Different from existing advanced communication techniques (e.g., beamforming, dynamic rate control) to realize ultra-reliable and high-capacity transmission by adapting to channel conditions, both NSIN and RIS can proactively alter the wireless channels to significantly improve communication performance. Specifically, NSIN can achieve rich line-of-sight (LoS) transmission through trajectory/location control. RIS can create desired LoS links for EM wave propagation and eliminate EM signal interference via smart EM signal reflection. 
	The integration of NSIN with RIS achieves complementary advantages, which will significantly increase the design flexibility of NSIN, reduce cost, and {improve transmission} performance. 
	Through mounting RIS on an airborne platform of NSIN, RIS can achieve a 360° panoramic full-angle reflection toward target nodes. 
	By deploying RIS-integrated NSIN, wide-area interference inherent in NSIN can be {mitigated}, and the spectral efficiency of NSIN can be improved. 
	Meanwhile, the integration of RIS in NSIN will improve the energy efficiency of NSIN due to the low hardware and energy costs of RIS. 
	
	\subsection{State of the Art}
	Recently, research on the URLLC-enabled UAV network or NSIN has attracted much attention, as outlined in \cite{DBLP:journals/tii/WangAS20,DBLP:journals/tii/RanjhaKD22,DBLP:journals/tvt/BartoliM22,DBLP:journals/tcom/WangPRXZN21,DBLP:journals/jsac/YangXGQCC21,DBLP:journals/iotj/RanjhaK21, DBLP:journals/iotj/ChenWZWF21,DBLP:journals/tcom/XiCYCQW21,DBLP:journals/wcl/YangXQCC21}.
	For example, the work in \cite{DBLP:journals/tii/WangAS20} proposed an NSIN with low/medium/high altitude platforms to support URLLC applications. The authors {developed} a complex network optimization tool, i.e., polychromatic sets, to analyze the performance of each link. Based on the analysis results, they designed distributed and centralized link selection schemes to select end-to-end links that could {meet} the stringent latency and reliability requirements of URLLC. The work in \cite{DBLP:journals/tii/RanjhaKD22} {utilized} a UAV relay to {send} short URLLC control packets from a controller to many mobile robots. In particular, the authors {investigated the optimization problem in terms of UAV deployment,} beamwidth, transmit power, and blocklength allocated to each robot and proposed an iterative {search} method to obtain the optimal solution. 
	The work in \cite{DBLP:journals/tvt/BartoliM22} analyzed channel quality indicator (CQI) aging effects in URLLC under the finite blocklength regime and utilized a Recurrent Neural Network (RNN) approach to predict the next CQI for UAV control information exchange.
	Besides, the authors in \cite{DBLP:journals/tcom/WangPRXZN21} studied the {failed packet reception} probability and effective throughput of URLLC control links in UAV communications. They derived the closed-form expression of packet error probability and throughput using Gaussian-Chebyshev quadrature and obtained the optimal packet length using an exhaustive search. 
	
    RIS is an important emerging technique that can provide optimized energy and spectral efficiency for communication systems. {Recent studies showed that a RIS-assisted communication system could be 40\% more energy-efficient than a relay-assisted system \cite{DBLP:conf/globecom/HuangAZDY18}.}
	Above works \cite{DBLP:journals/tii/WangAS20,DBLP:journals/tii/RanjhaKD22,DBLP:journals/tvt/BartoliM22,DBLP:journals/tcom/WangPRXZN21,DBLP:journals/jsac/YangXGQCC21,DBLP:journals/iotj/RanjhaK21, DBLP:journals/iotj/ChenWZWF21} did not investigate the NSIN- or UAV-assisted URLLC systems with the aid of RIS. 
    {Nonetheless, the integration of RIS in aerial URLLC communication systems has recently attracted extensive attention in the research community \cite{DBLP:journals/access/LiYDMD21, DBLP:journals/iotj/RanjhaK21a, DBLP:journals/tvt/SamirEASG21}.}
	For instance, the authors in \cite{DBLP:journals/access/LiYDMD21} proposed a UAV-RIS system to support the stringent URLLC constraints and formulated an optimization framework in terms of UAVs' locations, the phase shift of RIS, and the blocklength of URLLC. To solve the formulated problem, a deep neural network (DNN) architecture was designed to obtain 3-D deployment locations of UAVs, and an iterative optimization scheme was explored to optimize the phase shift and blocklength. 
	The work in \cite{DBLP:journals/iotj/RanjhaK21a} considered a scenario of deploying a UAV and a ground RIS to deliver URLLC packets between ground Internet-of-Things (IoT) devices. The authors proposed to jointly optimize the passive beamforming, the deployment location of a UAV, and the blocklength to minimize the packet decoding error probability (DEP). A computationally efficient Nelder–Mead simplex approach was utilized to solve the joint optimization problem. 
	Additionally, the authors in \cite{DBLP:journals/tvt/SamirEASG21} investigated the benefits of UAV/RIS-based systems to passively relay low-latency information sampled by Internet of Things (IoT) devices to a base station (BS). 
	For this aim, they formulated a joint optimization problem of UAV altitude, the phase shift of RIS, and communication scheduling, aiming to minimize the expected sum Age-of-Information (AoI) under a reliable decoding constraint. They also designed a DRL framework based on proximal policy optimization to solve the optimization problem. 
	
	\subsection{Motivations and Contributions}
	Although the issue of designing UAV/RIS-based systems to support URLLC applications was investigated in the works \cite{ DBLP:journals/access/LiYDMD21, DBLP:journals/iotj/RanjhaK21a, DBLP:journals/tvt/SamirEASG21}, they did not consider the scenario of integrating high altitude platforms (HAPs), UAVs, and aerial RIS to provide wide-area and robust communication coverage in an energy- and cost-efficient manner. 
	Most recently, the works in \cite{DBLP:journals/corr/abs-2104-01723, DBLP:journals/wcl/GaoJLM21} investigated the integration of NSIN and RIS. 
	For example, the authors in \cite{DBLP:journals/corr/abs-2104-01723} considered a novel wireless architecture composed of a RIS-enabled HAP and multiple UAVs. The HAP-RIS could overcome the limitations of conventional terrestrial RIS and achieve rich LoS and full-area coverage. The UAVs that would receive backhaul reflection signals from HAP-RIS were deployed to serve terrestrial flash crowd traffic. They formulated and solved a joint optimization problem of HAP deployment location and phase shift of RIS to increase the energy efficiency of UAVs. 
	The work in \cite{DBLP:journals/wcl/GaoJLM21} considered a scenario of deploying a UAV-RIS to assist the HAP downlink transmission. The authors formulated an optimization problem of maximizing the signal-to-noise ratio (SNR) of ground users (GUs) by jointly optimizing the UAV trajectory and the phase shift of UAV-RIS. They developed a model-free reinforcement learning framework to learn the UAV trajectory and adjust the phase shift according to the learned UAV trajectory. 
	The works \cite{DBLP:journals/corr/abs-2104-01723, DBLP:journals/wcl/GaoJLM21}, however, investigated the high throughput scenario instead of the URLLC scenario.
	Moreover, they did not studied the crucial HAP-UAV channel estimation problem. A LoS propagation assumption between a HAP and a low-altitude UAV was enforced in \cite{DBLP:journals/corr/abs-2104-01723, DBLP:journals/wcl/GaoJLM21}. This assumption will be unreasonable as the received signals by a low-altitude UAV include both LoS and non-line-of-sight (NLoS) (e.g., reflected/scattered signals) components. Therefore, it's essential to study the HAP-UAV channel estimation issue. 
	To tackle the above issues, this paper proposes to design a RIS-integrated NSIN to support energy-efficient URLLC services, which leverages the advantages of power control and RIS. 
	The main contributions of this paper are summarized as follows:
	\begin{itemize}
		\item We formulate the problem of providing energy-efficient URLLC services using a RIS-integrated NSIN as a resource optimization problem. The goal of this problem is to maximize the effective throughput and minimize the system power consumption, subject to URLLC and physical resource constraints. 
	The formulated problem is challenging in terms of channel estimation, RIS phase shift optimization, theoretical analysis, and effective solution. 
	\item To handle these challenges, we propose a joint resource allocation algorithm in which we first develop a probabilistic channel model to capture the sparsity of the HAP-UAV channel and then design a novel channel estimation approach by message passing to accurately estimate the HAP-UAV channel. 
	\item Second, we apply a maximum-ratio transmission (MRT) strategy to maximize the signal-to-noise ratio (SNR) and design a RIS phase shift optimization strategy, resulting in a suboptimal HAP-UAV channel gain.
	Besides, we derive an analysis-friendly expression of DEP by Taylor expansion to make the formulated problem analytically tractable {and help to lower the DEP value}. 
	\item Third, through analyzing the monotonicity of the objective function in the feasible region, we decouple the decision variables and horizontally decompose the problem into a BS-Layer optimization problem and a UAV-Layer optimization problem. An iterative optimization strategy is then explored to optimize transmit power and blocklength alternately. 
	\item Finally, the proposed algorithm is compared with diverse benchmarks to verify its performance, and the impact of various design parameters is comprehensively discussed {through extensive simulations}. Simulation results show that the proposed algorithm can achieve accurate HAP-UAV channel estimation and can provide energy-efficient URLLC services. 
	\end{itemize}

\section{System model and problem formulation}
	\subsection{System Model}
	We consider a downlink transmission scenario based on a RIS-integrated NSIN communication system for energy-efficient URLLC service provision. As illustrated in Fig. \ref{fig_downlink_scenario}, the system mainly includes a base station (BS), a RIS-mounted HAP, a low-attitude UAV, and multiple ground \textcolor{black}{remote URLLC robots}. The BS is designated to transmit ultra-reliable and low-latency signals towards these \textcolor{black}{remote robots for accomplishing some critical tasks (e.g., grid, pipeline, security, and remote railway patrol, farming, and mining)}. 
	\begin{figure*}[!t]
		\centering
		\includegraphics[width=5.6in, height=2.2in]{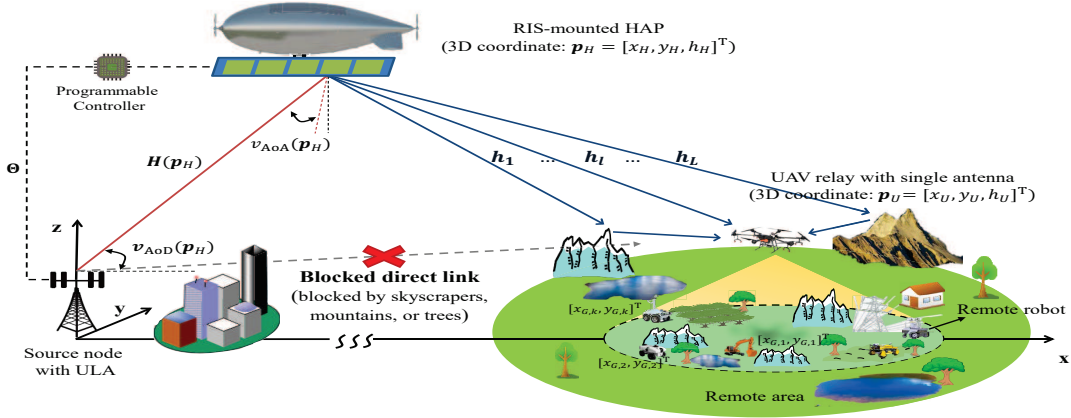}
		\caption{{A downlink URLLC transmission scenario with a RIS-integrated NSIN communication system.}}
		\label{fig_downlink_scenario}
	\end{figure*}
	\textcolor{black}{A direct propagation path between the BS and a remote URLLC robot, however, will be blocked by numerous physical obstructions on the ground (e.g., skyscrapers, mountains, and trees) over the remote propagation path.} 
	Then, the RIS-mounted HAP and the UAV can work cooperatively to establish ultra-reliable near-space communication links between the BS and multiple \textcolor{black}{remote URLLC} robots. The RIS-mounted HAP acts as a passive flying relay, and the UAV acts as an active flying relay.
	In this system, the BS is located at the origin and equipped with a uniform linear-array (ULA) antenna with $M$ antenna array elements. The antenna gain equals $G$, and the spacing of adjacent antenna elements is $d_{\rm BS}$. There are $K$ single-antenna URLLC robots that are randomly and uniformly distributed in a considered area $\mathcal{G}$. The URLLC robot set is denoted as ${\mathcal K}= \left\{ 1,...,K\right\} $, {and denote the coordinate of the $k$-th robot by ${\bm p}_G = [x_{G,k},y_{G,k}]^{\rm T}$}. The UAV is equipped with an omnidirectional antenna, hovering in the air with a fixed three-dimensional (3-D) coordinate ${\bm p}_U=[x_U,y_U,h_U]^{\rm T}$. A decoding-and-forward (DF) relay strategy is adopted to flexibly control the blocklength for the BS-HAP-UAV link and the UAV-to-ground (UtG) link. Moreover, the frequency division multiple access (FDMA) manner is exploited to eliminate intra-cell interference \cite{DBLP:journals/jsac/LiJSZ16,DBLP:journals/twc/ZhouSLLH19}.
	
	For the RIS-mounted HAP, we assume that a ULA-structured RIS is deployed directly below the HAP to provide wide-area and full-angle coverage. The RIS includes $N$ passive reflecting elements, separated by the spacing $d_{\rm RIS}$. 
	We assume that the BS array is deployed on the origin. 
	The HAP hovers at a fixed point ${\bm p}_H=[x_H,y_H,h_H]^{\rm T}$ in the stratosphere. We assume that all elements of the RIS are turned on without power amplification. Thus, ${\rm \bm \Theta}={\rm diag}\left\{e^{j\theta_1},e^{j\theta_2},...,e^{j\theta_N} \right\}\in \mathbb{C}^{N \times N}$ is the diagonal phase shift matrix of the RIS with phase shift parameter of the $n$-th element ${\theta_n}\in (0,2\pi]$ \cite{DBLP:journals/twc/KhaliliMZJYJ22,DBLP:journals/twc/HuZZ22,DBLP:journals/twc/LiDRTY21}.
	During the first transmission phase, the BS will transmit URLLC signals to the UAV with the assistance of the HAP relay. In this phase, a cascading channel comprised of the BS-HAP channel and the HAP-UAV channel is constructed. Denote the BS-HAP channel by ${\bm {H}}\in \mathbb{C}^{N \times M} $, and the HAP-UAV channel by ${\bm {h}}\in \mathbb{C}^{N}$. Therefore, the received signal at the UAV from the BS can be expressed as
	\begin{equation}\label{eq_received_signal}
		{y} = \sqrt{P_bG}{\bm {h}}^{\rm H}{\bm {\Theta}}{\bm {Hv}}u_1+{ n}
	\end{equation}
	where $P_b$ is the source transmit power of BS, ${\rm \pmb v}\in \mathbb{C}^{N \times 1}$ is the unit-magnitude precoding vector,  $u_1$ is the transmission signal with unit-power. ${ n}$ is the additive white Gaussian noise (AWGN) vector, following a distribution of $\mathcal{CN}(0,\sigma_0^2)$, and {$\sigma_0^2$} is the noise power. {We define ${{SNR}_u = P_bG|\bm{ h}^{\rm H} \bm{\Theta} \bm{Hv}|^2/\sigma_0^2}$ as the instantaneous signal-to-noise ratio (SNR) experienced by the UAV.}
	In the second transmission phase, the UAV decodes received URLLC packets from the BS and re-encodes and forwards the packets to robots via the FDMA scheme. Denote the UtG channel by $\left\{g_k\right\}$, $\forall k \in {\mathcal K}$. Then, the received signal of any robot can be expressed as
	\begin{equation}\label{eq_}
		{y_k} = \sqrt{P_u}g_k{u_2}+{ n_k}
	\end{equation}
	where $P_u$ is the transmit power of UAV, $u_2$ is the transmission signal with unit-power, the noise ${n_k}$ is subject to a normal distribution with zero mean and variance $\sigma_{k}^2$. 
	Then, ${SNR}_k = P_u|g_k|^2/\sigma_{k}^2$ represents the instantaneous SNR experienced by robot $k$.

	\subsection{BS-HAP Channel Model}
	In practice, HAP is usually deployed in an EM propagation environment without rich scatters. It's then reasonable to assume that the BS-HAP channel is dominated by the LOS component. 
 The resulting BS-HAP channel matrix can be expressed as
	\begin{equation}\label{eq_H_matrix}
		{\bm H}=\alpha\bm{a}_{\rm RIS}(\upsilon_{{\rm AoD}}(\bm{p}_{H}))\bm{a}_{\rm BS}^{\rm H}(\upsilon_{{\rm AoA}}(\bm{p}_{H}))
	\end{equation}
	where $\alpha$ represents the complex channel gain, capturing path loss and additional attenuation caused by the EM propagation environment (e.g., rain and water vapor). {$\bm{a}_{\rm RIS}(.)\in \mathbb{C}^{M}$ and $\bm{a}_{\rm BS}(.)\in \mathbb{C}^{N}$ are array steering vectors of BS and RIS-mounted HAP, respectively, with $\bm{a}_{\rm RIS}(.)=[1,e^{-j2\pi \bar d_{\rm RIS}{{\rm sin}(.)}},...,e^{-j2\pi (M-1)\bar d_{\rm RIS}{{\rm sin}(.)}}]^{\rm T}$ and $\bm{a}_{\rm BS}(.)=[1,e^{-j2\pi \bar d_{\rm BS}{{\rm sin}(.)}},...,e^{-j2\pi (N-1)\bar d_{\rm BS}{{\rm sin}(.)}}]^{\rm T}$}, 
 where $\bar d_{\rm RIS} = d_{\rm RIS}/{{\lambda_0}}$, $\bar d_{\rm BS} = d_{\rm BS}/{{\lambda_0}}$, ${\lambda_0}$ is wavelength. $\upsilon_{{\rm AoD}}(\bm{p}_{H})$ and $\upsilon_{{\rm AoA}}(\bm{p}_{H})$ are the AoD and AoA of the BS-HAP link, respectively. Aided by precise position information obtained by GPS or other positioning systems, both AoA and AoD can be obtained.
	
	Owing to the relative stability of the BS-HAP channel, we assume that its channel gain can be maintained stable for a long time and continuous  estimation is not necessary. Therefore, the channel gain is considered to be known in advance in this paper.
	
	\subsection{HAP-UAV Channel Model}
	Unlike the BS-HAP channel, the EM propagation environment of the HAP-UAV channel will be more complex. A UAV is usually deployed close to robots to establish high-quality UtG links; as a result, there is limited physical scatters (e.g., trees and mountains) around the UAV, {as shown in Fig. \ref{fig_downlink_scenario}}. 
	We consider such a case that the HAP-UAV channel fading consists of large-scale channel fading and small-scale channel fading caused by limited reflected/scattered paths. The large-scale fading can be obtained by computing free-space path loss and meteorological loss. {Nevertheless, the acquisition of the small-scale fading under the limited scattering environment is highly challenging; as a result, we need to estimate it.} In addition, due to the high-frequency transmission characteristic \cite{grace2011broadband}, the transmitted signal has a sound geometric characteristic. To this end, we adopt a narrowband flat-fading channel model in \cite{DBLP:books/cu/TV2005}
	\begin{equation}\label{eq_h_multipath}
		\displaystyle \bm{ h}=\sum\nolimits_{l=1}^{L}\beta_l\bm{ a}_{\rm RIS}(\omega_{l})
	\end{equation}
	where $\beta_l$  and $\omega_{l}$ represent the complex path gain and AoD of the $l$-th path, respectively. 

	\subsection{Decoding Error Probability in a URLLC Regime}
	Distinct from conventional wireless communications, a URLLC regime has stringent QoS requirements on transmission latency and reliability (e.g., DEP). To satisfy these requirements, the blocklength of URLLC packets should be controlled within a finite length. However, {the Shannon's capacity formula cannot relate these performance metrics with the blocklength}. Shannon's capacity refers to the maximum achievable channel coding rate (MAR) that can be obtained with arbitrary transmission latency, which indicates that a signal blocklength should be sufficiently long. To tackle this issue, we leverage tight MAR approximations \cite{DBLP:journals/tit/PolyanskiyPV10,yang2014quasi} of the BS-HAP-UAV link (denoted by $R_u$) and the UAV-GU link (denoted by $R_k$), which are extremely precise under the assumption that a blocklength is greater than 50 symbols, i.e., $R_* = {\rm log_2}(1+{SNR}_*)-\sqrt{\frac{1-(1+{SNR}_*)^{-2}}{b_*}}\frac{Q^{-1}(\varepsilon_{*})}{\rm ln2}$,
	where $*\in\left\{u,k\right\}$, $Q(x)=\frac{1}{\sqrt{2\pi}}\int_x^{\infty}{\rm exp}(-t^2/2){\rm d}t$ is a Gaussian Q-function, which is also a monotonically decreasing function of the argument. $b_u$ and $b_k$ are the blocklengths allocated to the BS-HAP-UAV link and the UAV-GU link, respectively. As the UAV adopts the DF strategy, the BS-HAP-UAV link and the UAV-GU link can be configured with different blocklengths. $\varepsilon_u$ and $\left\{\varepsilon_k \right\}_{k=1}^K$ are the DEPs of the UAV and the robots, respectively. $F_k$ {is denoted as} the size of a URLLC packet transmitted to the $k$-th robot from the UAV, and $F_B$ the size of a URLLC packet sent by the BS. According to the FDMA downlink transmission scheme, we have $R_u=F_B/b_{u}$ and $R_k=F_k/b_{k}$ \cite{DBLP:journals/tvt/YuanYFD22,DBLP:journals/icl/PanRDEN19,DBLP:journals/corr/abs-2207-04738,DBLP:journals/icl/HashemiAML22}. The corresponding DEP is then given by
	\begin{equation}\label{eq_}
		\begin{array}{l}
			{\varepsilon _*} = Q\left( {\sqrt {\frac{{{b_*}}}{{1 - {{(1 + {{SN}}{{{R}}_*})}^{ - 2}}}}} \left( {{\rm{lo}}{{\rm{g}}_2}\left( {1 + {{SN}}{{{R}}_*}} \right) - {R_*}} \right){\rm{ln}}2} \right)
		\buildrel \Delta \over = Q(f({b_*},{{SN}}{{{R}}_*}))
		\end{array}
	\end{equation}
	
	\subsection{Problem Formulation}
	Based on the above system model, this subsection aims to formulate a joint transmission power, blocklength, and RIS phase shift optimization problem for providing energy-efficient URLLC services for robots.
	
	\subsubsection{URLLC and physical resource constraints}
	DEP and communication latency are crucial performance metrics of URLLC services, which should be {carefully delt with}. 
	{Let $\varepsilon_u$ represent the DEP at the UAV, and $\varepsilon_k$ the DEP at $k$-th robot.} 
	As decoding errors, either at the UAV or at a GU, will result in transmission failure, the overall DEP experienced by the $k$-th robot is
	\begin{equation}\label{eq_overall_epsilon}
		\displaystyle  \varepsilon = (1-\varepsilon_{u})\varepsilon_k+\varepsilon_u = \varepsilon_k+\varepsilon_u-\varepsilon_{u}\varepsilon_k < \varepsilon_k+\varepsilon_u
	\end{equation}
	
	Note that we omit the term $\varepsilon_u\varepsilon_k$ as it's much smaller than the other two terms. {From (\ref{eq_overall_epsilon}), we can observe that both the DEP of a UAV and the DEP of a robot should be reduced to decrease the overall DEP.} 
 {Nevertheless, statistical features of the cascaded BS-HAP-UAV channel and the UtG channel are distinct. The distribution of the cascaded BS-HAP-UAV channel is unknown, resulting in the unavailability of its statistics.} Further, as a backbone link, it's significant to provide an extremely low and stable DEP over the BS-HAP-UAV link under any circumstance. 
    The UtG small-scale channel fading model is {considered to be known \cite{DBLP:journals/comsur/KhuwajaCZAD18}}, and then the constraint on the DEP of a robot can be replaced with the form of expectation.
	Hence, we separately impose the DEP constraints on the UAV and robots as follows 
	\begin{equation}\label{eq_}
		\varepsilon_u\leq \varepsilon^{\rm th}_u, \text{ } {\mathbb E}[\varepsilon_k] \leq \varepsilon^{\rm th}_k,\forall k\in {\mathcal K}
	\end{equation}
	where $\varepsilon^{\rm th}_u$ and $\varepsilon^{\rm th}_k$ are the tolerable DEPs of the UAV and the $k$-th GU, respectively, ${\mathbb E}[\cdot]$ the expectation operator. 
	
	Next, we discuss the communication latency-related constraint. To guarantee the ultra-low latency performance, the blocklength should be maintained at a finite length \cite{DBLP:journals/access/LiYDMD21,DBLP:journals/icl/PanRDEN19,DBLP:journals/tvt/YuanYFD22}. Note that the UAV adopts the FDMA-DF mode, which allows us to flexibly design different blocklengths over each communication link to further boost the overall NSIN system performance. Define the minimum and maximum blocklength sets allocated to the BS and the UAV as $\left\{B_{u}^{\min},B_{u}^{\max}\right\}$ and $\left\{B^{\min},B^{\max}\right\}$, respectively. Thus, the latency-related constraints can be described as \cite{DBLP:journals/icl/PanRDEN19,DBLP:journals/tvt/YuanYFD22}
	\begin{equation}\label{eq_BS_length}
		B_{u}^{\min} \leq b_u \leq B_{u}^{\max}, \text{ } b_u \in \mathbb{Z^+}, \text{ } B^{\min}\leq b_k \leq B^{\max}, \text{ } b_k  \in \mathbb{Z^+},\forall k\in {\mathcal K}
	\end{equation}
	Finally, we discuss the constraints on BS and UAV transmit power. We investigate the power control problem of BS and UAV to achieve energy-efficient URLLC packet transmission. 
	The configured transmit power is not allowed to exceed a power budget. Denote the maximum transmit power of the BS and the UAV by $P_{B}$ and $P_{U}$, respectively. Therefore, we have the following transmit power constraints
	\begin{equation}\label{eq_BS_power}
		0 < P_b\leq P_{B}, \text{ }  0 < P_u \leq P_{U}
	\end{equation}
	\subsubsection{Objective function design}
	Energy efficiency is a key performance evaluation indicator for UAV-assisted cellular networks \cite{DBLP:journals/tvt/ChangMLW21,DBLP:journals/iotj/WangLLTK20,DBLP:journals/tgcn/ChakareskiNM0AR19}. Based on the concepts of MAR, and BS and UAV transmit power, we design an energy efficiency model to represent the objective function. 
	The energy efficiency model is defined as the ratio of the effective throughput \cite{DBLP:journals/tvt/YuanYFD22} to the system power consumption, which is given by 
	\begin{equation}\label{eq_}
		\displaystyle  \eta_k = \frac{R_u(1-\varepsilon_u/\varepsilon^{\rm th}_u)+R_k(1-\varepsilon_k/\varepsilon^{\rm th}_k)}{P_b/P_B+{P_u/P_U}}, \forall k \in {\mathcal K}
	\end{equation}
	where the system power consumption is the sum of normalized BS and UAV transmit power.
 

	As different robots may experience diverse channel qualities, we aim at maximizing the minimum system energy efficiency when guaranteeing the stringent URLLC requirements of all robots. 
	We propose to jointly optimize UAV and BS transmit power, blocklength, and RIS phase shift to achieve the goal, under the constraints on DEPs, blocklengths, and UAV and BS transmit power.
	The joint resource optimization problem can then be formulated as 	\begin{subequations}\label{eq:formualted_problem}
		\begin{alignat}{2}
			& \max \limits_{P_u,P_b,b_u,\{b_k\},\{\phi_n\}} \min  \limits_{k \in \mathcal K}  \text{ } \displaystyle  \frac{R_u(1-\epsilon_u)+R_k(1-\epsilon_k)}{ {\bar p}_b+\bar p_u}    \\
			\allowdisplaybreaks[4]
			& {\rm s.t:} \text{ }
			 Q\left( {f\left( {{b_u},{SNR}_u} \right)} \right) \le \varepsilon _u^{{\rm{th}}} \\ 
			& {  {\mathbb E}[Q\left( {f\left( {{b_k},{SN}{{R}_k}} \right)} \right)] \le \varepsilon _k^{{\rm{th}}},\forall k \in {\mathcal K}  } \\
			& \phi_n \in [0,2\pi), \forall n\in \mathbb{N} \\
			& {\rm Constraints \text{ } (\ref{eq_BS_length}) \text{ } and \text{ } (\ref{eq_BS_power})}
		\end{alignat}
	\end{subequations}
	where ${\bar p}_b= P_b/P_B$, $\bar p_u = P_u/P_U$, $\epsilon_u = \varepsilon_u/\varepsilon^{\rm th}_u$, and $\epsilon_k = \varepsilon_k/\varepsilon^{\rm th}_k$.
	The solution of (\ref{eq:formualted_problem}) is highly challenging for the following reasons:
	1) The HAP-UAV channel gain is unknown and needs to be accurately estimated. However, there is a lack of studies on sparse HAP-UAV channel modeling and corresponding accurate channel estimation methods in the literature. 
	{2) There are multiple EM propagation paths in the HAP-UAV channel, and thus, it's difficult to obtain the optimal RIS phase shift such that the received signal strength of the UAV can be maximized.}
	3) The expression of DEP is highly complex, which makes the problem intractable and greatly hinders the theoretical analysis of the problem. 
	4) Given the estimated HAP-UAV channel gain, (\ref{eq:formualted_problem}) is still difficult to be solved. This is because it contains integer and continuous variables and the decision variables-related to HAP and UAV are intricately coupled, which indicate that the problem is a mixed integer non-convex programming (MINCP) problem. 

	In this paper, we develop a joint channel estimation, enhancement, and resource optimization framework to solve the above challenging problem. In particular, we first develop a probabilistic channel model to capture the sparsity of the HAP-UAV channel and design a novel channel estimation approach to estimate the HAP-UAV channel with high precision. 
	Second, we design a RIS phase shift optimization strategy that results in a suboptimal HAP-UAV channel gain.
	Third, we derive the approximate expressions of DEPs to make the problem analytically tractable. 
	Fourth, we analyze the monotonicity of the objective function of the problem. Based on the analysis results, we decompose the problem into two-layered optimization problems, which reduces the difficulty of the theoretical analysis of the problem. 
	
	\section{HAP-UAV CHANNEL ESTIMATION {and ENHANCEMENT}}
	In this section, we will introduce a message passing-based estimation approach of the HAP-UAV channel\footnote{{The channel estimation mentioned in the paper refers to the small-scale channel estimation without specification. As for large-scale fading, it is usually computed by some deterministic models in multipath channel estimation problems in the research community \cite{DBLP:journals/spl/ArdahGAH21,DBLP:journals/spl/WangFDL20,DBLP:journals/access/TahaAA21}.}}. {The RIS-user channel estimation is a hot and under-studied topic. Recent works in \cite{DBLP:journals/spl/ArdahGAH21,DBLP:journals/spl/WangFDL20,DBLP:journals/access/TahaAA21} proposed to use compressed sensing (CS) methods to estimate the RIS-user multi-path channel. However, the authors in \cite{DBLP:journals/spl/ArdahGAH21,DBLP:journals/spl/WangFDL20,DBLP:journals/access/TahaAA21} did not discuss the crucial issue of angle estimation or off-grid angle estimation. As a result, the accuracy of the channel estimation needs to be improved. Here, the HAP-UAV multi-path channel estimation with the off-grid angle consideration will be studied.} 

    \subsection{Off-Grid Angular Domain Channel Model}
    To utilize the sparse property of the HAP-UAV channel, we can transform the original channel model into an angular domain representation. In this way, the channel estimation problem is transformed into a sparse signal recovery problem. Generally, we adopt a uniform sampling grid ${\bm {\widehat\omega}}=\left\{\widehat\omega_1,...,\widehat\omega_N \right\}$, which is obtained by discretizing the angular domain $[-\frac{\pi}{2}, \frac{\pi}{2}]$, to match the true AoDs denoted as ${\bm {\omega}}=\left\{\omega_1,...,\omega_L \right\}$ ($L \ll N$). However, the continuous distribution characteristic of the true AoD indicates that the grid points cannot effectively capture the true AoDs in practice. Thus, the AoD estimation via an on-grid model usually leads to non-negligible errors. To perform high-precision estimation, we introduce off-grid offsets into the on-grid model, which is also called an off-grid model. Specifically, given $\omega_l \notin \left\{\widehat\omega_1,...,\widehat\omega_N \right\}$ and its nearest sampling grid point $\widehat{\omega}_{nl}= \mathop {\rm \min}  \limits_{n_l \in \left\{1,...,N\right\} }\left\{|\omega_l-\widehat\omega_{n_l}|\right\}$, we can define the off-grid offset $\Delta\phi_{nl}$ as
	\begin{equation}\label{eq_}
		\Delta\phi_{nl} = \begin{cases}{\omega}_{l} - \widehat{\omega}_{nl} , & l \in {\mathcal L} = \{1, 2, \ldots, L\}\\ 0,& {\rm otherwise}\end{cases}
	\end{equation}
	
	The case of $\Delta\phi_{nl} = 0$ indicates that only active paths will be assigned with a non-zero offset (If we don't know the number of true paths in advance, we can set a threshold of path strength to filter non-active paths). Then the corresponding off-grid steering matrix is ${{\bm {A}(\widehat{\bm{  \omega}}+\Delta\bm{\phi})}}=[\bm{ a}(\widehat\omega_1+\Delta\phi_{1}),...,\bm{ a}(\widehat\omega_N+\Delta\phi_{N})]$, and the sparse angular domain channel vector in the complex angular domain is $\bm{ x}\in \mathbb{C}^N$ with $L$ non-zero elements. Finally, the off-grid sparse angular domain representation of the {normalized} downlink channel $\displaystyle \bm{ \overline{h}}\in \mathbb{C}^N$ can be expressed as
	\begin{equation}\label{eq_offgrid_cha}
		\bm{\overline{h}}={\bm{A}(\widehat{\bm {\omega}}+\Delta \bm{\phi})\bm {x}} \triangleq {\bm{A}(\Delta \bm{\phi})\bm {x}}
	\end{equation}
	
	It's noteworthy that there must be some off-grid offsets making (\ref{eq_offgrid_cha}) hold. Though it's quite challenging to find the optimal off-grid offsets, the gap between the sampling grid points and the true AoDs will be significantly reduced by iteratively adjusting the off-grid offsets. Therefore, the off-grid model always outweighs the on-grid model.
	
	\subsection{Probability Model of Channel Vector}
	Recall that the HAP-UAV channel has limited paths in the angular domain. To fully exploit this {observation}, the probabilistic model of ${\bm{ x}}=[x_1,...,x_N]^{\rm T}$ can be modeled by two independent random hidden vectors, i.e., \emph{hidden support vector} and \emph{hidden value vector}. The \emph{hidden support vector} ${\bm{ d}}=[d_1,...,d_M]^{\rm T} \in \left\{0,1\right\}^N$ represents {the activeness of the paths}. Specifically, if $d_n=1$, then there is an active path around the $n$-th AoD direction $\widehat\omega_N+\Delta\delta_{N}$. We define $p(d_n=1)=\lambda$, indicating the channel sparsity. The complex-valued \emph{hidden value vector} ${\bm{q}} = [q_1,...,q_N]^{\rm T}$ represents the complex gains of paths, and $q_n \sim \mathcal{CN}(\zeta,\rho)$ follows an independent and identically distributed (i.i.d.) complex Gaussian distribution with a mean $\zeta$ and variance $\rho$.
	Accordingly, the joint probabilistic HAP-UAV channel model with a prior distribution can be expressed as $p({\bm{x}})=p({\bm{x}},{\bm{d}},{\bm{q}})=p({\bm{x}}|{\bm{d}},{\bm{q}})p({\bm{d}})p({\bm{q}})$,
	where $p({\bm{x}}|{\bm{d}},{\bm{q}})$ is the joint conditional prior. From the expression of $p({\bm{x}})$, we know that $p(x_n|d_n,q_n)=1$ conditioned on $x_n=d_nq_n$; otherwise, $p(x_n|d_n,q_n)=0$. Thus, $p({\bm{x}}|{\bm{d}},{\bm{q}})$ is given by
	\begin{equation}\label{eq_}
		\displaystyle  p({\bm{x}}|{\bm{d}},{\bm{q}})=\prod_{n=1}^Np(x_n|d_n,q_n)=\prod_{n=1}^N\delta(x_n-d_nq_n)
	\end{equation}
	where $\delta(.)$ is the Dirac delta function. 
	
	
	\subsection{Problem Formulation of Off-Grid Sparse Channel Estimation}
	Using the angular domain channel representation, we can rewrite (\ref{eq_received_signal}) as a standard CS model
	\begin{equation}\label{eq_input_output_equ}
	{{\bm{y}} = {\bm{F}}(\Delta \bm{\phi}){\bm{x}}+\bm{ n}_e}
    \end{equation}
    where {{${\bm{F}}(\Delta \bm{\phi})={\bm{U_0}^{\rm H}}{\bm {\bar H}}^{\rm H}{\bm{\Theta}}^{\rm H}{\bm{A}(\Delta \bm{ \phi})}\in\mathbb{C}^{P \times N}$}} is the measurement matrix, ${\bm{\bar H}}$ is the normalized BS-HAP channel matrix, {${\bm{U_0}}\in\mathbb{C}^{M \times P}$} is a training pilot matrix from the BS, $\bm{ n}_e \sim \mathcal{CN}(0, \sigma_e^2\bm I)$ is the complex AWGN. {In terms of} a CS problem, the choice of measurement matrix {has a direct influence on the desired} performance of the recovery approach. In the proposed model, we design pilot $\bm U_0$ based on the partial discrete Fourier transform (DFT) random permutation (pDFT-RP) measurement matrix, whose effectiveness is verified in \cite{DBLP:journals/spl/MaYP15a}. 
	{Please refer to Appendix A for the detailed design of the pilot matrix $\bm U_0$.}
	
    {We can then estimate the off-grid sparse channel with (\ref{eq_input_output_equ}).} Besides, we leverage the minimum mean square error (MMSE) criterion to evaluate the estimation performance. The estimated complex gain is $\hat{x}_n=E(x_n|{\bm{y}},\Delta \bm{ \phi})$, which will result in the MMSE. The expectation is taken over the marginal posterior $p(x_n|{\bm{y}},\Delta\bm{\phi})$ given measurements ${\bm{y}}$ and off-grid offsets $\Delta\bm{\phi}$. To exploit the inherent statistical structure of the HAP-UAV channel in terms of channel support and value vectors, the marginal posterior distribution can be expressed by Bayes' rule as
\begin{equation}\label{eq_marginal_posterior}
\begin{array}{l}
	\displaystyle  p(x_n|{\bm{y}},\Delta\bm{\phi}) \propto \sum_{\bm{d}} \int_{{\bm{x}}_{-n},{\bm{q}}}p({\bm{x,d,q}|{{\bm y},}\Delta\bm{\phi}}) 
	=\sum \limits_{\bm {d}} \int_{{\bm{x}}_{-n},{\bm{q}}} p({\bm{d}})p({\bm{q}})\prod_{n=1}^Np(x_n|d_n,q_n)\prod_{p=1}^P p(y_p|{\bm{x}},\Delta\bm{\phi})
\end{array}
\end{equation}
	where $\propto$ indicates proportionality up to a constant scale factor, ${\bm{x}}_{-m}$ means a vector $\bm{x}$ excluding the element $x_m$. According to (\ref{eq_input_output_equ}), the likelihood function $p(y_p|{\bm{x}},\Delta\bm{ \phi})=\mathcal{CN}(y_p;{\bm{f}}_p{\bm{x}},\sigma_e^2)$, where ${\bm{f}}_p$ is the $p$-th row of measurement matrix $\bm{F}$. The optimal off-grid offsets $\Delta\bm{\phi}$ can be obtained by a maximum likelihood (ML) approach
	\begin{equation}\label{eq_ML}
		\Delta\bm{\phi}^* = {\rm arg} \max \limits_{\Delta\bm{\phi}} {\rm ln}p({\bm{y}},\Delta\bm{ \phi})={\rm arg} \max \limits_{\Delta\bm{\phi}} {\rm ln} \int_{\bm{x}} p({\bm{x}},{\bm{y}},\Delta\bm{ \phi}){\rm d}{\bm{x}}
	\end{equation}
	
	After obtaining the estimated offsets $\Delta\bm{\phi}^*$, we can then calculate the corresponding marginal posterior distribution (\ref{eq_marginal_posterior}). It's the production of many distribution functions, each of which is determined by only a subset of variables. Such decomposition structures can be calculated by a graphical model, i.e., a factor graph, which can be an undirected bipartite graph used to represent the connection between random variables (called variable nodes) and related probability density functions (pdfs) (called factor nodes). Unfortunately, in (\ref{eq_marginal_posterior}), the generated factor graph contains a densely connected structure due to the measurement matrix, which {underestimates} the exact calculation of the posterior distribution using the factor graph. To handle this challenge, we propose a Recurrent-OAMP approach that can decouple the linear and nonlinear observations in the following subsection. 
	
	\subsection{Recurrent-OAMP Approach}
	The Recurrent-OAMP approach contains two iteratively working modules, as shown in Fig. \ref{fig_modules}: Module A is a linear MMSE (LMMSE) estimator, inputting observations and messages from Module B; Module B is an MMSE estimator, inputting the inherent sparse prior and messages from Module A. Both modules will work iteratively until convergence.
	\begin{figure}[!t]
		\begin{minipage}[t]{0.45\textwidth}
			\centering
			\includegraphics[width=3in]{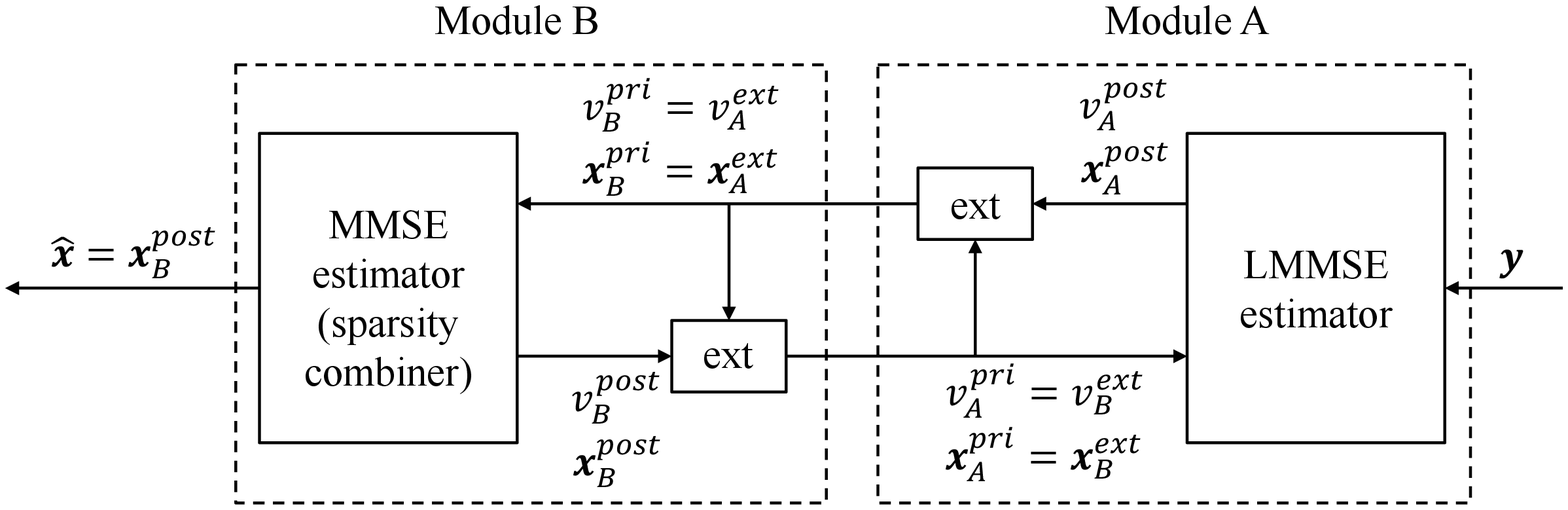}
			\caption{{Two recurrent modules of the R-OAMP approach.}}
			\label{fig_modules}
		\end{minipage}
		\hspace{0.02\linewidth}
		\begin{minipage}[t]{0.45\textwidth}
			\centering
			\includegraphics[width=3in,height=1.6in]{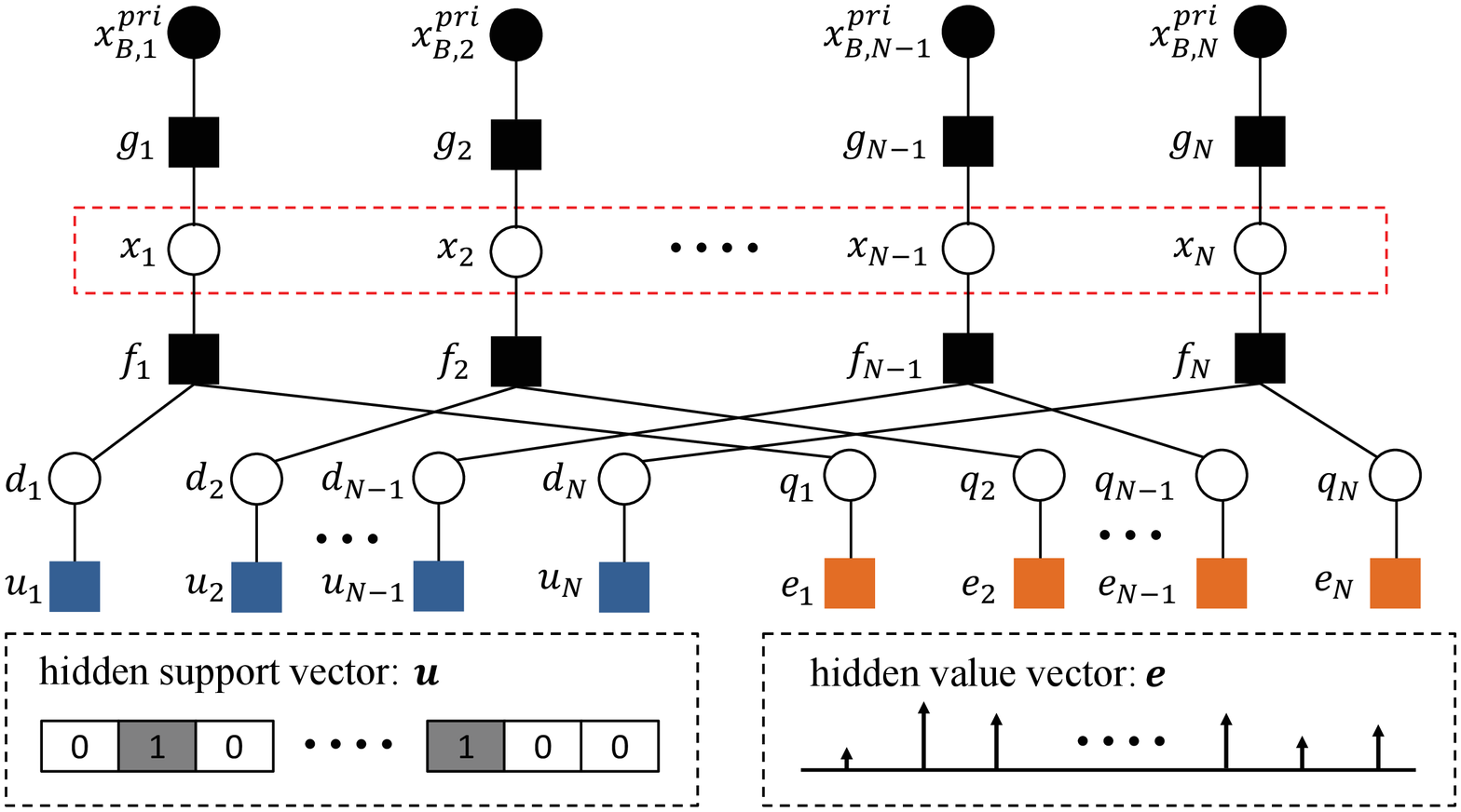}
			\caption{{A factor graph of hidden channel support and value vectors.}}
			\label{fig_factor_graph}
		\end{minipage}
	\end{figure}
	
	\subsubsection{ Module A: LMMSE Estimator}
	In Module A, since only linear observations are considered, we can consider that $\bm{x}$ has a Gaussian distribution with mean ${\bm{x}}^{pri}_A$ and variance $v_A^{pri}$ passed from module B without a sparse prior. Given received signals $\bm{y}$ and the prior distribution $\mathcal{CN}(\bm{x};{\bm{x}}^{pri}_A,v_A^{pri}\bm{I})$, the LMMSE estimation  ${\bm{x}}^{post}_A$ and its mean squared error (MSE) ${v_A^{post}}$ are respectively given by \cite{kay1993fundamentals}
\begin{equation}\label{eq_xapost_eq_vapost}
	\displaystyle {\bm{x}}^{post}_A={\bm{x}}^{pri}_A+\frac{v_A^{pri}}{v_A^{pri}+\sigma_e^2}{\bm{F}(\Delta\bm{\phi})}^{\rm H}({\bm{y}}-{\bm{F}(\Delta\bm{ \phi})}{\bm{x}}^{pri}_A), \text{ }
	\displaystyle {v_A^{post}}={v_A^{pri}}-\frac{P}{N}.\frac{(v_A^{pri})^2}{v_A^{pri}+\sigma_e^2}
\end{equation}
	
	As $\mathcal{CN}({\bm{x}};{\bm{x}}^{post}_A,v_A^{post}\bm{I})$ and $\mathcal{CN}({\bm{x}};{\bm{x}}^{pri}_A,v_A^{pri}\bm{I}) $ are independent of each other, the extrinsic message is also subject ot a Gaussian distribution and given by
	\begin{equation}\label{eq_}
		\mathcal{CN}({\bm{x}};{\bm{x}}^{post}_A,v_A^{post}\bm{I})\propto \mathcal{CN}({\bm{x}};{\bm{x}}^{pri}_A,v_A^{pri}\bm{I}) \mathcal{CN}({\bm{x}};{\bm{x}}^{ext}_A,v_A^{ext}\bm{I})
	\end{equation}
	
	Hence, the extrinsic mean and variance can be computed by \cite{kay1993fundamentals}
\begin{equation}\label{eq_x_aext_eq_v_aext}
	\displaystyle {\bm x}^{pri}_B={\bm x}^{ext}_A={v}^{ext}_A ({{\bm x}^{post}_A}/{{v}^{post}_A}-{{\bm x}^{pri}_A}/{{v}^{pri}_A}), \text{ }
	\displaystyle {v}^{pri}_B={v}^{ext}_A=({1}/{{v}^{post}_A}-{1}/{{v}^{pri}_A})^{-1}
\end{equation}
	where ${\bm x}^{pri}_B $ and ${v}^{pri}_B$ are the input mean and variance of Module B, which are equal to ${\bm x}^{ext}_A$ and ${v}^{ext}_A$, respectively. 
	\subsubsection{Module B: MMSE Estimator}
	In Module B, {the MMSE estimator is based on the assumption that ${\bm{x}}^{pri}_B={\bm{x}}+{\bm{z}}$ where ${\bm{z}} \sim \mathcal{CN}(0,{v}^{pri}_B\bm{I})$, which has been proved to be valid for OAMP \cite{DBLP:journals/access/MaP17}. Based on the assumption and the proposed sparsity channel priors, the factor graph of the joint distribution $p({\bm {x,d,q}{,{\pmb x}^{pri}_B}})$ denoted by $\mathcal{J}$ can be obtained, as shown in Fig. \ref{fig_factor_graph}. In the graph, the functional form of each factor node is listed as below:
	$g_n(x_n,x^{pri}_{B,n}) \sim p(x^{pri}_{B,n}|x_n)= \mathcal{CN}(x^{pri}_{B,n};{x_n},v_B^{pri})$, $f_n(x_n, d_n, q_n) \sim p(x_n|d_n, q_n)=\delta(x_n-d_nq_n)$, $u_n(d_n) \sim p(d_n)={(1-\lambda)^{1-d_n}(\lambda)^{d_n}}$, 
	$e_n(q_n) \sim p(q_n) = \mathcal{CN}({q_n};\zeta,\rho)$.
	Our goal is to calculate the marginal posterior distribution $\left\{p({x_n}|{\bm{x}}^{pri}_B)\right\}_{n=1}^{N}$ by directly exploring sum-product message passing rules on the graph $\mathcal{J}$. The details of the calculation of the Module B are presented in Algorithm 1.}

	When the posterior distribution is obtained by message passing in Module B, the extrinsic message will be sent back to module A. This message-passing pattern is similar to how a recurrent network works, so the recurrent-OAMP (R-OAMP) approach is named. R-OAMP executes message passing between Module A and Module B until convergence.
	Once the estimated posterior distribution $\hat{p}({\bm{x}|{\bm{y},\Delta\bm{ \phi}}})=\prod_{n=1}^N p(x_n|{\bm{y}},\Delta\bm{\phi})$ is obtained, we can update the off-grid offsets $\Delta\bm{\phi}$ using ML. Unfortunately, we can't obtain the closed-form expression of the log-likelihood function ${\rm ln}p({\bm{y}},\Delta\bm{\phi})$. As a result, the gradient descent method can't be directly utilized to derive the optimal solution $\Delta\bm{ \phi}^*$. 
 The following Theorem shows how to obtain $\Delta\bm{ \phi}$.
 \begin{thm}
     {\rm {After introducing a surrogate function $r(\Delta\bm{\phi};\Delta\bm {\dot\phi})=\int p({\bm{x}}|{\bm{y}},\Delta\bm{\dot \phi}){\rm ln}\frac{p({\bm{x}},{\bm{ y}},\Delta\bm{\phi})}{p({\bm{x}}|{\bm{y}},\Delta\bm{\dot \phi})}d{{\bm{x}}}$, one can explore a gradient ascend method to update $\Delta\bm{ \phi}$ with
	\begin{equation}\label{eq_update_phi}
		\displaystyle \Delta\bm{ \phi}^{i+1}= \Delta\bm{\phi}^{i} + \alpha_i\frac{\partial r(\Delta\bm{\phi};\Delta\bm{\phi}^i)}{\partial \Delta\bm{\phi}}|_{\Delta\bm{\phi}= \Delta\bm{\phi}^i}
	\end{equation}
	where $i$ represents the $i$-th iteration, $\Delta\bm{\phi}^{i}$ stands for the corresponding value of $\Delta\bm{\phi}$, and $\alpha_i$ is the update stepsize.
 }}
 \end{thm}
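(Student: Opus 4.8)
The plan is to recognize the update rule (\ref{eq_update_phi}) as an instance of the minorize--maximize (MM) principle that underlies expectation--maximization, and to show that $r(\Delta\bm{\phi};\Delta\bm{\dot\phi})$ is a tangent lower bound of the intractable log-likelihood ${\rm ln}\,p(\bm{y},\Delta\bm{\phi})$ at the current iterate. First I would rewrite the marginalization integral in (\ref{eq_ML}) by multiplying and dividing the integrand by the posterior $p(\bm{x}|\bm{y},\Delta\bm{\dot\phi})$ produced by the R-OAMP procedure, so that ${\rm ln}\,p(\bm{y},\Delta\bm{\phi}) = {\rm ln}\int p(\bm{x}|\bm{y},\Delta\bm{\dot\phi})\frac{p(\bm{x},\bm{y},\Delta\bm{\phi})}{p(\bm{x}|\bm{y},\Delta\bm{\dot\phi})}\,d\bm{x}$. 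Since the logarithm is concave, a direct application of Jensen's inequality moves the logarithm inside the expectation and yields ${\rm ln}\,p(\bm{y},\Delta\bm{\phi}) \ge r(\Delta\bm{\phi};\Delta\bm{\dot\phi})$ for every $\Delta\bm{\phi}$. This establishes that $r$ is a global minorizer, and, crucially, it replaces the intractable marginal with an integral taken against the known posterior.

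The second step is to verify tangency and, more importantly, the matching of first-order information at the anchor point $\Delta\bm{\phi}=\Delta\bm{\dot\phi}$. Substituting $\Delta\bm{\phi}=\Delta\bm{\dot\phi}$ and using $p(\bm{x},\bm{y},\Delta\bm{\dot\phi})/p(\bm{x}|\bm{y},\Delta\bm{\dot\phi})=p(\bm{y},\Delta\bm{\dot\phi})$ together with the fact that the posterior integrates to one gives $r(\Delta\bm{\dot\phi};\Delta\bm{\dot\phi})={\rm ln}\,p(\bm{y},\Delta\bm{\dot\phi})$, so the bound is tight there. I would then show that the gap between the two functions is exactly a Kullback--Leibler divergence: a short manipulation gives ${\rm ln}\,p(\bm{y},\Delta\bm{\phi})-r(\Delta\bm{\phi};\Delta\bm{\dot\phi})=D_{\mathrm{KL}}\!\left(p(\bm{x}|\bm{y},\Delta\bm{\dot\phi})\,\|\,p(\bm{x}|\bm{y},\Delta\bm{\phi})\right)$. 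Because this divergence is nonnegative and vanishes precisely at $\Delta\bm{\phi}=\Delta\bm{\dot\phi}$, the anchor is a global minimizer of the gap, whose gradient there must be zero. This forces the gradient-matching identity $\partial_{\Delta\bm{\phi}}\,r(\Delta\bm{\phi};\Delta\bm{\dot\phi})|_{\Delta\bm{\phi}=\Delta\bm{\dot\phi}}=\partial_{\Delta\bm{\phi}}\,{\rm ln}\,p(\bm{y},\Delta\bm{\phi})|_{\Delta\bm{\phi}=\Delta\bm{\dot\phi}}$, which is exactly the justification for moving along $\nabla r$ in (\ref{eq_update_phi}): an ascent direction for the tractable surrogate at the current iterate is simultaneously an ascent direction for the true, intractable objective.

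Finally I would confirm that $\nabla r$ is actually computable so that (\ref{eq_update_phi}) can be run. Writing $p(\bm{x},\bm{y},\Delta\bm{\phi})=p(\bm{y}|\bm{x},\Delta\bm{\phi})\,p(\bm{x})$ and noting from the observation model (\ref{eq_input_output_equ}) that $p(\bm{y}|\bm{x},\Delta\bm{\phi})=\mathcal{CN}(\bm{y};\bm{F}(\Delta\bm{\phi})\bm{x},\sigma_e^2\bm{I})$, the only $\Delta\bm{\phi}$-dependence sits in the Gaussian exponent through the measurement matrix $\bm{F}(\Delta\bm{\phi})$, so $\partial r/\partial\Delta\bm{\phi}$ reduces to a posterior expectation of the derivative of a quadratic form, with $\partial\bm{F}/\partial\Delta\bm{\phi}$ obtained by differentiating the steering matrix $\bm{A}(\Delta\bm{\phi})$ entrywise. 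I expect the main obstacle to be the gradient-matching step: one must argue carefully that differentiation may be interchanged with the posterior expectation (so that the KL-gap is smooth at the anchor) and that the entropy-like term $-\int p(\bm{x}|\bm{y},\Delta\bm{\dot\phi})\,{\rm ln}\,p(\bm{x}|\bm{y},\Delta\bm{\dot\phi})\,d\bm{x}$, being constant in $\Delta\bm{\phi}$, is correctly dropped when taking the gradient; the Jensen and computability steps are then routine.
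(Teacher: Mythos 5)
Your proposal is correct and takes essentially the same route as the paper's Appendix B: both cast the update as an in-exact MM/EM scheme in which $r(\Delta\bm{\phi};\Delta\bm{\dot\phi})$ is a minorizer of ${\rm ln}\,p(\bm{y},\Delta\bm{\phi})$ that is tangent with matching gradient at the anchor, so a gradient-ascent step on the tractable surrogate is an ascent step for the intractable likelihood. The only difference is one of emphasis: you explicitly verify the three MM conditions (Jensen's inequality for the global lower bound, tangency, and the Kullback--Leibler-gap argument for gradient matching) that the paper merely asserts with a citation, whereas the paper instead carries the computation further, substituting the R-OAMP Gaussian posterior estimate for the exact posterior and deriving the closed-form gradient $\bm{\psi}^{(i)}_{\Delta\bm{\phi}}$ that you only sketch as a posterior expectation of the derivative of the Gaussian quadratic form.
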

 \begin{proof}
		Please refer to Appendix B.
	\end{proof}
To obtain high-precision estimation results, we utilize the backtracking linear search method, which can avoid oscillating results or slow convergence by adaptively adjusting the stepsize.	
	Besides, please refer to {Appendix B} for the detailed update expression of the off-grid offsets $\Delta\bm{\phi}^{i+1}$. 
	
	\begin{algorithm}
\caption{Recurrent-OAMP}
\label{alg2}
\begin{algorithmic}[1]
\STATE \textbf{Input and initialization:}  Received signal (or observations) $\bm{y}$, measurement matrix ${\bm F}(0)$, noise variance $\sigma^2_e$, $I_c^{\max} = 50$, ${\bm{x}}^{pri}_A=\bm{0}$, ${v}^{pri}_A=\lambda\rho$, and $\Delta\bm{\phi}=\bm{0}$.
\WHILE{not converge}
\STATE \textbf{Module A (given  $\Delta\bm{\phi}=\Delta\bm{\phi}^i$):}
\STATE Update ${\rm \pmb x}^{post}_A$ and ${v}^{post}_A$ using (\ref{eq_xapost_eq_vapost}). Update ${\bm{x}}^{pri}_B={\bm{x}}^{ext}_A$ and ${v}^{pri}_B={v}^{ext}_A$ using (\ref{eq_x_aext_eq_v_aext}).
\STATE \textbf{Module B:}
\STATE Update $\eta_{g_n \rightarrow x_n}(x_n)=\mathcal{CN}({x_n};x^{pri}_{B,n},v_B^{pri})$ with outputs of Module A. 
\STATE Update ${\eta _{{f_n} \to {x_n}}}({x_n})= (1-\lambda)\delta(x_n)+\lambda \mathcal{CN}(x_n; \zeta,\rho)$.
\STATE Update $x^{post}_{B,n}={\rm E}({x_n}|{\bm{x}}^{pri}_B)$ and ${v}^{post}_B=\frac{1}{N}\sum_{n=1}^N{\rm Var}({x_n}|{\bm{x}}^{pri}_B)$, where the posterior distribution $p({x_n}|{\bm{x}}^{pri}_B)= \frac{\eta_{f_n \rightarrow x_n}(x_n)\eta_{g_n \rightarrow x_n}(x_n)}{(1-\lambda) \mathcal{CN}(0;x^{pri}_{B,n},v_B^{pri})+\lambda  \mathcal{CN}(0;x^{pri}_{B,n}-\zeta,v_B^{pri}+\rho)}$.
\STATE Update ${\bm{x}}^{pri}_A={\bm{x}}^{ext}_B={v}^{ext}_B ({{\bm{x}}^{post}_B}/{{v}^{post}_B}-{{\bm{x}}^{pri}_B}/{{v}^{pri}_B})$ and ${v}^{pri}_A={v}^{ext}_B=({1}/{{v}^{post}_B}-{1}/{{v}^{pri}_B})^{-1}$.
\STATE Repeat Module A and Module B until convergence.
\STATE Output $\hat{p}({\bm{x}|{ \bm{y},{\Delta\bm{\phi}}^i}})=\mathcal{CN}({x_n};{\bm{ x}}^{post}_{B},v_B^{post}\bm{I})$
\STATE Update off-grid offsets $\Delta\bm{\phi}^{i+1}$ using (\ref{eq_update_phi}). Update $i = i+ 1$.
\ENDWHILE
\STATE \textbf{Output:} $\hat{\bm{x}}={\bm{x}}_B^{post}$ and {$\hat {\bm h} = { \bm{A}(\Delta \bm{\phi}^{i+1}) \hat{\bm{x}}}$.}
\end{algorithmic}
\end{algorithm}
	
	\subsection{{Channel Enhancement via RIS Phase Alignment}}
	{In this subsection, we discuss how to enhance the HAP-UAV channel by aligning RIS phases.}
	The optimization of RIS phase shift is crucial for the performance improvement of RIS-integrated communication systems. By aligning the phase shift ${\bm{\Theta}}$, signals sent by the BS and then reflected by the RIS can be coherently added to the UAV, which thus significantly increase the HAP-UAV channel gain and {maximize} the received SNR by the UAV ${SNR}_u$. 
	
	Given the estimated HAP-UAV channel vector $\bm{ h}$, we can then formulate the RIS phase shift optimization problem as ${\bm \Theta}^{\star}  = \mathop {\arg \max }\limits_{\bm \Theta} \frac {P_bG}{\sigma_0^2}|\bm{ h}^{\rm H}{\bm \Theta}{\bm{H}}\bm{{v}}|^2 $, where $\sigma_0^2$ is the noise power. To achieve the optimal ${\bm \Theta}$, we first need to determine the precoding vector $\bm{v}$. The following Lemma presents a strategy for designing the precoding vector. 
	
	\begin{lemma}\label{lem_beamformer}
		{\rm For any given BS-HAP-UAV propagation channel and RIS phase shift, the BS needs to adopt a maximum ratio transmission (MRT) strategy to maximize ${SNR}_u$, that is, we have}
		\begin{equation}\label{eq_precoding_vec}
			\displaystyle \bm{v}=\frac{\bm{ a}_{BS}(\upsilon_{{\rm AoD}})}{||\bm{ a}_{BS}(\upsilon_{{\rm AoD}})||_2}
		\end{equation}
	\end{lemma}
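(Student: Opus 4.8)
The plan is to exploit the rank-one (LoS-dominated) structure of the BS--HAP channel in~(\ref{eq_H_matrix}) to separate the precoder $\bm{v}$ from everything else in the received SNR, and then to close the argument with a single use of the Cauchy--Schwarz inequality. Because the lemma holds the channel $\bm{h}$, the matrix $\bm{H}$, and the phase-shift matrix $\bm{\Theta}$ fixed, the only quantity to be optimized in ${SNR}_u=P_bG|\bm{h}^{\rm H}\bm{\Theta}\bm{H}\bm{v}|^2/\sigma_0^2$ is the unit-norm precoder $\bm{v}$.

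First I would substitute the rank-one factorization $\bm{H}=\alpha\,\bm{a}_{\rm RIS}(\cdot)\,\bm{a}_{BS}^{\rm H}(\upsilon_{\rm AoD})$ from~(\ref{eq_H_matrix}) into the SNR numerator. Since $\bm{H}$ is an outer product, the bilinear chain collapses into a product of two scalars,
\begin{equation}\label{eq_factored}
\bm{h}^{\rm H}\bm{\Theta}\bm{H}\bm{v}=\alpha\,\bigl(\bm{h}^{\rm H}\bm{\Theta}\bm{a}_{\rm RIS}(\cdot)\bigr)\bigl(\bm{a}_{BS}^{\rm H}(\upsilon_{\rm AoD})\bm{v}\bigr),
\end{equation}
and therefore $|\bm{h}^{\rm H}\bm{\Theta}\bm{H}\bm{v}|^2=|\alpha|^2\,|\bm{h}^{\rm H}\bm{\Theta}\bm{a}_{\rm RIS}(\cdot)|^2\,|\bm{a}_{BS}^{\rm H}(\upsilon_{\rm AoD})\bm{v}|^2$. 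I would then pull the two $\bm{v}$-independent scalars out as a nonnegative constant $C=P_bG|\alpha|^2|\bm{h}^{\rm H}\bm{\Theta}\bm{a}_{\rm RIS}(\cdot)|^2/\sigma_0^2$, so that maximizing ${SNR}_u$ over $\bm{v}$ becomes exactly equivalent to solving $\max_{\|\bm{v}\|_2=1}|\bm{a}_{BS}^{\rm H}(\upsilon_{\rm AoD})\bm{v}|^2$.

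The remaining inner-product maximization is settled by Cauchy--Schwarz: $|\bm{a}_{BS}^{\rm H}(\upsilon_{\rm AoD})\bm{v}|\le\|\bm{a}_{BS}(\upsilon_{\rm AoD})\|_2\|\bm{v}\|_2=\|\bm{a}_{BS}(\upsilon_{\rm AoD})\|_2$, where the last step uses the unit-norm precoding constraint. Equality is attained precisely when $\bm{v}$ is co-linear with $\bm{a}_{BS}(\upsilon_{\rm AoD})$; combining this with $\|\bm{v}\|_2=1$ fixes the maximizer (up to an immaterial global phase) as $\bm{v}=\bm{a}_{BS}(\upsilon_{\rm AoD})/\|\bm{a}_{BS}(\upsilon_{\rm AoD})\|_2$, which is the claimed MRT precoder~(\ref{eq_precoding_vec}).

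This argument carries essentially no deep obstacle; the only care points are (i) justifying the decoupling --- the clean separation in~(\ref{eq_factored}) is possible precisely because $\bm{H}$ is rank one, and the extracted factor $C$ is genuinely $\bm{v}$-free and nonnegative, so its value does not perturb the $\arg\max$; and (ii) reading the stated ``unit-magnitude'' precoder as the unit-norm constraint $\|\bm{v}\|_2=1$, which is exactly what makes the Cauchy--Schwarz bound tight. I would also note explicitly that ${SNR}_u$ is invariant to a global phase rotation of $\bm{v}$, so the MRT rule prescribes the optimal direction rather than a strictly unique vector.
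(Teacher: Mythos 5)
Your proof is correct, but it takes a more elementary route than the paper's. The paper forms the quadratic version of the problem: it defines $\bm{\Lambda}=\bm{h}^{\rm H}\bm{\Theta}\bm{H}$, writes ${SNR}_u = P_bG\,\bm{v}^{\rm H}\bm{\Lambda}^{\rm H}\bm{\Lambda}\bm{v}/\sigma_0^2$, observes that $\bm{\Lambda}^{\rm H}\bm{\Lambda}$ is a rank-one Hermitian matrix of the form $D\,\bm{a}_{\rm BS}(\upsilon_{\rm AoD})\bm{a}_{\rm BS}(\upsilon_{\rm AoD})^{\rm H}$ with $D>0$, computes its unique nonzero eigenpair, and invokes the Rayleigh--Ritz theorem to conclude that the unit-norm maximizer of the quadratic form is the dominant eigenvector $\bm{a}_{\rm BS}(\upsilon_{\rm AoD})/\|\bm{a}_{\rm BS}(\upsilon_{\rm AoD})\|_2$. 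You instead factor the scalar $\bm{h}^{\rm H}\bm{\Theta}\bm{H}\bm{v}$ directly through the rank-one structure of $\bm{H}$, strip off the $\bm{v}$-independent factor, and settle $\max_{\|\bm{v}\|_2=1}|\bm{a}_{\rm BS}^{\rm H}(\upsilon_{\rm AoD})\bm{v}|$ by Cauchy--Schwarz. The two arguments rest on the same structural fact (the LoS channel makes the effective channel rank one), and Rayleigh--Ritz applied to a rank-one Gram matrix is in essence Cauchy--Schwarz, but your version is self-contained and avoids eigen-machinery; it also makes the equality condition explicit, so you get for free the observation that the optimizer is unique only up to a global phase, and it makes visible where the paper's positive constant $D$ comes from (namely $D=|\alpha|^2|\bm{h}^{\rm H}\bm{\Theta}\bm{a}_{\rm RIS}|^2$, which you should note could degenerate to zero in the measure-zero event $\bm{h}^{\rm H}\bm{\Theta}\bm{a}_{\rm RIS}=0$, in which case every unit-norm $\bm{v}$ is trivially optimal --- a corner case neither you nor the paper addresses, though it is harmless). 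The paper's eigenvalue formulation, on the other hand, generalizes more readily if $\bm{H}$ were not rank one, in which case the MRT direction would be replaced by the dominant right singular vector of $\bm{\Lambda}$.
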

	\begin{proof}
		Please refer to Appendix C.
	\end{proof}
	
	Using (\ref{eq_precoding_vec}), ${SNR}_u$ can be rewritten as
	\begin{equation}\label{eq_expand_SNRu}
		\begin{array}{l}
			SN{R_u} = \frac{{{P_b}G}}{{{\sigma_0 ^2}}}\left| {{{\bm h}^{\rm{H}}}{\bm \Theta} {\bm H}\frac{{{{\bm a}_{{\rm{BS}}}}({\upsilon _{{\rm{AoD}}}})}}{{||{{\bm a}_{{\rm{BS}}}}({\upsilon _{{\rm{AoD}}}})|{|_2}}}} \right|
			= \frac{{\alpha {P_b}G}}{{{\sigma_0 ^2}}}{\left| {\sum\limits_{n = 1}^N {\sum\limits_{l = 1}^L {{\beta _l}} } {e^{j({\theta _n} + 2\pi (n - 1){{\bar d}_{{\rm{RIS}}}}({\rm{sin}}({\omega _l}) - {\rm{sin}}({\upsilon _{{\rm{AoA}}}}))}}} \right|^2}
		\end{array}
	\end{equation}
	{where $\beta_l$ is the product of the $l$-th non-zero element of the estimated small-scale channel fading $\hat {\bm x}$ and the large-scale channel fading of the HAP-UAV channel}.
	
	From (\ref{eq_expand_SNRu}), we can find that ${SNR}_u$ is independent of AoD $\upsilon_{{\rm AoD}}$. However, as there are multiple reflected/scattered paths in the HAP-UAV EM propagation environment, it's difficult to optimize $\theta_n$ ($\forall n$) that can maximize the received signal power of the UAV. 
	{Hence, we attempt to obtain the suboptimal $\theta_n$}. 
 We arrange (\ref{eq_expand_SNRu}) as
	\begin{equation}\label{eq_rearranged_SNRu}
		\displaystyle {SNR}_u=\frac {\alpha P_bG}{\sigma_0^2}\left|\sum_{l=1}^{L}\beta_lr_l\right|^2 
	\end{equation}
	where $r_l \triangleq |\sum_{n=1}^{N} e^{j(\theta_n+2\pi (n-1)d_{\rm RIS}( {\rm sin}(\omega_{l})-{\rm sin}(\upsilon_{{\rm AoA}}))}|^2$.
 
	Maximizing (\ref{eq_rearranged_SNRu}) is equivalent to maximizing $\{r_l\}_{l \in {\mathcal L}}$. 
	Denote $l_0$ as the phase alignment path of RIS and $\psi_n^*(l_0)$ the corresponding phase shift. Then, we can coherently reflect signals via path $l_0$ by setting 
	\begin{equation}\label{eq_new_refl_path}
		\psi_n^*(l_0)=\theta-2\pi (n-1){\bar d}_{\rm RIS}( {\rm sin}(\omega_{l_0})-{\rm sin}(\upsilon_{{\rm AoA}}))
	\end{equation}
	where $\theta \in [0,2\pi)$ is an arbitrary phase shift. By enforcing 
    \begin{equation}\label{eq_new_refl_path_final_phi}
    \theta_n  =  \psi_n^*(l_0)
    \end{equation}
	and performing specific mathematical transformations, we can then rewrite $r_l$ as
	\begin{equation}\label{eq_simplified_rl}
		\displaystyle r_l=g(\Delta\psi_l(l_0)) \triangleq \left|\frac{{\rm sin}(\pi Nd_{\rm RIS}\Delta\psi_l(l_0))}{{\rm sin}(\pi d_{\rm RIS}\Delta\psi_l(l_0))}\right|^2
	\end{equation}
	with $\Delta\psi_l(l_0)={\rm sin}(\omega_l)-{\rm sin}(\omega_{l_0})$. 
 
 To maximize (\ref{eq_simplified_rl}), we need to minimize $|\Delta\psi_l(l_0)|$. Considering the low-latency requirement of URLLC services, we perform the first-order Taylor expansion on $|\Delta\psi_l(l_0)|$ to realize the rapid configuration of RIS phase shift, i.e., \begin{equation}\label{eq_}
		|\Delta\psi_l(l_0)| \approx |{\rm cos}(\omega_l)||\omega_l-\omega_{l_0}| 
	\end{equation}
	
	As a result, we can reformulate the RIS Phase shift optimization problem as
	\begin{subequations}\label{eq:optimize_RIS}
		\begin{alignat}{2}
			& \min \limits_{\omega_{l_0}, \zeta_l} \sum_{l=1}^L\left|{\beta_l}{{\rm cos}(\omega_l)}\right|\zeta_l   \\
			\allowdisplaybreaks[4]
			& {\rm s.t:} \text{ }
			 \omega_{l_0}-\zeta_l\leq \omega_l,\forall l \in {\mathcal L} \\ 
			& \omega_l \leq \omega_{l_0}+\zeta_l,\forall l \in {\mathcal L}
		\end{alignat}
	\end{subequations}

(\ref{eq:optimize_RIS}) is a linear programming problem and can then be efficiently solved by optimization tools such as MOSEK.
	
	\section{{RESOURCE OPTIMIZATION and ALGORITHM DESIGN}}
	We next discuss how to solve the original optimization problem with the {cascaded BS-HAP-UAV channel gain}. As mentioned above, the complicated DEP expression greatly hinders the theoretical analysis of the optimization problem. Therefore, we first attempt to approximate the DEP. 
	
	\subsection{DEP Approximation}
	To obtain an analysis-friendly expression of DEP ${\varepsilon _u}$, we improve the linear approximation method in \cite{DBLP:journals/tcom/DhokRSSL21}. The key idea of this method is to conduct the first-order Taylor series expansion at the point {${SNR}_u=\gamma_u$} to achieve an asymptotical expression. 
	{Different from the approximation in \cite{DBLP:journals/tcom/DhokRSSL21}, we extend the linear approximation region and propose to allocate redundant resources to achieve a lower DEP.} 
	Specifically, we approximate $\varepsilon_u$ as
	\begin{equation}\label{eq_}
		\begin{array}{l}
			{\varepsilon _u} \approx \Omega ({SNR}_u)
			= \left\{ {\begin{array}{*{20}{l}}
					{1,}&{SNR_u \le {{SNR}}_{\rm low}} \\
					{0,}&{SNR_u \ge {{SNR}}_{\rm up}} \\
					{\frac{1}{2} - \frac{1}{{2\chi _u}}({{SN}}{{{R}}_u} - {\gamma _u}).} & {\rm otherwise}
			\end{array}} \right.
		\end{array}
	\end{equation}
	{where $\displaystyle\gamma_u=2^{R_u}-1$, ${SNR}_{\rm low}={{\gamma_u} - \frac{1}{{{\chi _u}}}}$, ${SNR}_{\rm up} = {{\gamma_u} + \frac{1}{{{\chi _u}}}}$, and $\displaystyle \chi_u=\sqrt\frac{b_u}{2\pi}(2^{2R_u}-1)^{-1/2}$. }Similarly, $\Omega({SNR}_k)$ represents the Taylor series expansion based approximation of $Q(f(b_k,{SNR}_k))$.
	
	
	Recall that the UtG channel is a random channel. However, the UtG channel fading should be regarded as a constant during a transmission interval. Thus, we need to analyze the expected DEP, denoted by ${\overline \varepsilon  _k}$, experienced by any robot $k$, which can be calculated by
	\begin{equation}\label{eq_approximate_epsilon_aver}
		\begin{array}{l}
			{\overline \varepsilon  _k} = {\mathbb E}[Q(f({b_k},SN{R_k}))]
			\approx \int_0^{\infty}\Omega({SNR}_k)f_{{SNR}_k}(x){\rm d}x %
			=\chi_k\int_{\gamma_k-\frac{1}{{\chi_k}}}^{\gamma_k+\frac{1}{{\chi_k}}}F_{{SNR}_k}(x){\rm d}x
		\end{array}
	\end{equation}
	where $f_{{SNR}_k}(x)$ and $F_{{SNR}_k}(x)$ are the pdf and the cdf of ${SNR}_k$, respectively. 
 
	As the Rayleigh small-scale fading is considered, ${SNR}_k$ follows an exponential distribution. Thus, $F_{{SNR}_k}(x)$ can be computed by {\cite{DBLP:journals/wcl/GuCLV18}}
	\begin{equation}\label{eq_F_SNR}
		F_{{SNR}_k}(x) = \begin{cases}1-e^{-\frac{x}{\overline{{SNR}}_k}}, & x > 0 \\0,&  x\leq 0\end{cases}
	\end{equation}
	where $\overline{{SNR}}_k=P_u|g_k^L|^2/\sigma_{k}^2$ is the average value of ${SNR}_k$. 
 
 By substituting (\ref{eq_F_SNR}) into (\ref{eq_approximate_epsilon_aver}), for any robot $k \in {\mathcal K}$, we can calculate its expected DEP by
\begin{equation}\label{eq_approx_epsilon}
		\begin{array}{l}
			{\overline \varepsilon  _k} \approx {\chi _k}\int_{{\gamma _k} - \frac{1}{{{\chi _k}}}}^{{\gamma _k} + \frac{1}{{{\chi _k}}}} {(1 - {e^{ - x/{{\overline {SNR} }_k}}})} {\rm d}x  			{\mathop  \approx \limits^{(a)} \frac{2\gamma_k}{{\overline{SNR}}_k}}
		\end{array}
\end{equation}
	where (a) follows from the approximation $1-e^{-x/\overline{{SNR}}_k}\approx \frac{x}{\overline{{SNR}}_k}$ under an assumption of high SNR \cite{DBLP:journals/wcl/GuCLV18}. This assumption is reasonable for URLLC applications because robots need to achieve high SNR such that the stringent QoS requirements can be satisfied. 
	
	
	\subsection{Original Problem Decomposition}
	After obtaining the estimated HAP-UAV channel gain, the aligned RIS phase, and the approximated DEPs, the original problem (\ref{eq:formualted_problem}) becomes analytically tractable. 
	Nevertheless, we observe that there are integer and continuous variables in (\ref{eq:formualted_problem}) and the decision variables are complexly coupled. (\ref{eq:formualted_problem}) can be confirmed to be a MINCP problem, which is challenging. 
	Generally, resorting to an iterative optimization scheme is a feasible way of solving this challenging problem. 
	Yet, the developed algorithm would contain multiple nested loops if we directly adopt the iterative optimization scheme, which increases the computational complexity of the algorithm. 
	
	Fortunately, given a BS blocklength $b$, we observe that the optimal $P_b$ does not vary with the UAV transmit power $P_u$. 
	This is because there is a narrow feasible region $\hat {\mathcal D}_p \subseteq  {\mathcal D_p} = \left\{ {{P_b}\left| {Q\left( {f\left( {{b_u},SN{R_u}} \right)} \right) \le \varepsilon _u^{{\rm{th}}},0 < {P_b} \le {P_B}} \right.} \right\}$, in which $\Omega(SNR_u)$ strictly monotonically decreases with $P_b$. 
	Given the blocklength $b$, the objective function (\ref{eq:formualted_problem}a) then reduces to a monotonically increasing function over $P_b$ in the specific feasible region $\hat {\mathcal D}_p$. 
	
	Therefore, we can apply horizontal decomposition to (\ref{eq:formualted_problem}) that creates a problem of two-layered structure. The $1^{\rm st}$ layer is \emph{BS-Layer Optimization}, and the $2^{\rm nd}$ layer is \emph{UAV-Layer Optimization}. More importantly, we need not to conduct iterative optimization between these two layers.
	
	
	\subsubsection{BS-Layer Optimization}
	Given any decision variables-related to the UAV-Layer Optimization, we can formulate the BS-Layer Optimization problem as 
	\begin{subequations}\label{eq:BS_layer_opt}
		\begin{alignat}{2}
			& \max \limits_{P_b, {b_u}} \text{ }  \min \limits_{k \in {\mathcal K}} \text{ } \displaystyle  \frac{R_u(1-\epsilon_u)+R_k(1-\epsilon_k)}{{\bar p}_b+{\bar p}_u}   \\
			\allowdisplaybreaks[4]
			& {\rm s.t:} \text{ }
			 \Omega(SNR_u) \le \varepsilon_u^{\rm th}, B_{u}^{\min} \leq b_u \leq B_{u}^{\max}, \text{ } b_u \in \mathbb{Z^+}, 0 < P_b\leq P_{B}
		\end{alignat}
	\end{subequations}
	where $\min  \left \{ x_k \right \}$ denotes the minimum $x_k$, $\forall k \in {{\mathcal K}}$.
	
	We observe from the expressions of $R_u$ and $\Omega(SNR_u)$ that the decision variables $P_b$ and $b_u$ in (\ref{eq:BS_layer_opt}) are intricately coupled, which challenges the solution to (\ref{eq:BS_layer_opt}). The following Lemma presents an effective method of solving it.
 \begin{lemma}\label{cal_pb_bu}
     {{\rm One can obtain the optimum value of (\ref{eq:BS_layer_opt}) by alternately computing
     \begin{equation}\label{eq_optimal_BS_power}
		\displaystyle  P_b(b_u)={\rm \min} \left\{P_B, {SNR_{\rm up}}/{\Delta B} \right\}
	\end{equation}
 and solving (\ref{eq:BS_blocklength_opt}) using an exhaustively searching method.
 {\begin{equation}\label{eq:BS_blocklength_opt}
			\max \limits_{b_u}  \text{ } \min \limits_{k \in {\mathcal K}} \displaystyle \frac{R_u(1-\epsilon_u)+R_k(1-\epsilon_k)}{P_b(b_B)/P_B+\overline{p}_u}  
		\text{ } {\rm s.t:} \text{ }
    B_{u}^{\min} \leq b_u \leq B_{u}^{\max}, \text{ } b_u \in \mathbb{Z^+},\text{ } \Omega(SNR_u) \le \varepsilon_u^{\rm th}
\end{equation}}
}}
{\rm {Further, the optimum value will be obtained in two iterations.}
}
 \end{lemma}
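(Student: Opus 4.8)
The plan is to prove the lemma in three stages: first I establish the closed-form optimal power $P_b(b_u)$ for an arbitrary fixed blocklength $b_u$; then I use it to collapse the two-variable problem (\ref{eq:BS_layer_opt}) into the single-integer-variable problem (\ref{eq:BS_blocklength_opt}); finally I argue that the resulting alternation between the power update (\ref{eq_optimal_BS_power}) and the exhaustive blocklength search stabilises after two rounds.

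First I would fix $b_u$ and note that, after MRT precoding and RIS alignment, the received SNR is linear in the source power, $SNR_u = P_b\Delta B$ with $\Delta B$ the effective HAP--UAV channel gain, so $P_b\mapsto SNR_u$ is a strictly increasing bijection onto $(0,P_B\Delta B]$. I then dissect the objective (\ref{eq:BS_layer_opt}a) through the piecewise profile of $\Omega(SNR_u)$. On the linear segment $SNR_{\rm low}<SNR_u<SNR_{\rm up}$, equivalently $P_b<SNR_{\rm up}/\Delta B$, the feasible part is exactly the region $\hat{\mathcal D}_p$ on which, by the monotonicity observation already established above, the objective is strictly increasing in $P_b$. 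The decisive additional step is to examine the regime $SNR_u\ge SNR_{\rm up}$, i.e. $P_b\ge SNR_{\rm up}/\Delta B$: there $\Omega(SNR_u)=0$, hence $\epsilon_u=0$, so the numerator $R_u(1-\epsilon_u)+R_k(1-\epsilon_k)$ no longer depends on $P_b$ while the denominator $\bar p_b+\bar p_u$ keeps growing with $P_b$; the objective is therefore strictly decreasing there. Consequently the peak of the objective over $P_b$ sits exactly at $P_b=SNR_{\rm up}/\Delta B$.

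Intersecting this behaviour with the box $0<P_b\le P_B$ gives the formula: if $SNR_{\rm up}/\Delta B\le P_B$ the peak is attainable and the optimum is $P_b=SNR_{\rm up}/\Delta B$; otherwise the objective is still on its increasing branch at the right endpoint, so the optimum is $P_b=P_B$. This yields $P_b(b_u)=\min\{P_B,SNR_{\rm up}/\Delta B\}$ as in (\ref{eq_optimal_BS_power}); I would also record that the DEP constraint $\Omega(SNR_u)\le\varepsilon_u^{\rm th}$ is automatically met at this point, since $\Omega$ vanishes at $SNR_{\rm up}$, so feasibility reduces to nonemptiness of the original feasible set. Substituting $P_b(b_u)$ back into (\ref{eq:BS_layer_opt}a) eliminates $P_b$ and leaves the single-variable integer programme (\ref{eq:BS_blocklength_opt}); because $b_u$ ranges over the finite set $\{B_u^{\min},\dots,B_u^{\max}\}$, an exhaustive search returns its global optimiser $b_u^\star$ exactly.

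Finally, for the two-iteration claim I would model the procedure as a fixed-point map alternating the power update and the blocklength search. The point is that (\ref{eq:BS_blocklength_opt}) already carries the per-blocklength optimal power $P_b(b_u)$ inside its objective, so for every candidate $b_u$ the power is reset to its optimum before the objective is evaluated; hence a single pass of the exhaustive search already returns the jointly optimal pair $(b_u^\star,P_b(b_u^\star))$, and the subsequent power update merely recomputes the value $P_b(b_u^\star)$ it is already sitting at. The second iteration therefore reproduces $b_u^\star$ and $P_b(b_u^\star)$ without change, which is the convergence certificate. I expect the main obstacle to be this last stage: making the circular coupling rigorous, i.e. proving that one substantive round lands on the \emph{global} optimum rather than a coordinate-wise stationary point, and that the second round is a genuine fixed point so that ``two iterations'' is a theorem and not merely an empirical observation. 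The cleanest way around it is to observe that both per-coordinate optimisations are exact and that substituting $P_b(b_u)$ turns the outer search into a true one-dimensional global optimisation over a finite grid, which leaves no room for further improvement after the first pass.
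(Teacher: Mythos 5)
Your proposal is correct and follows essentially the same route as the paper's proof (Appendix D): the same case analysis on the piecewise $\Omega(SNR_u)$ — objective strictly decreasing for $SNR_u \ge SNR_{\rm up}$ since $\epsilon_u$ vanishes there, and increasing on the narrow linear segment $\hat{\mathcal D}_p$ by the numerator-dominance argument — yielding $P_b(b_u)=\min\{P_B,\, SNR_{\rm up}/\Delta B\}$, followed by an exhaustive search over the finite blocklength grid with the closed-form power substituted into the objective. Your fixed-point framing of the two-iteration claim is a slightly more careful articulation of the paper's Remark 2, but the underlying reason is identical: $P_b(b_u)$ is a closed-form function of $b_u$, so one search pass plus one power evaluation already lands on the joint optimum.
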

 \begin{proof}
		Please refer to Appendix D.
	\end{proof}

	\subsubsection{UAV-Layer Optimization}
	Given any decision variables related to the BS-Layer Optimization, we can formulate the UAV-Layer Optimization problem as 
	\begin{subequations}\label{eq:UAV_layer_opt}
		\begin{alignat}{2}
			&  \max \limits_{P_u, \{b_k\}} \text{ } \min  \limits_{k \in {\mathcal K}} \text{ } \displaystyle  \frac{R_u(1-\epsilon_u)+R_k(1-\epsilon_k)}{{\bar p}_b+{\bar p}_u}  \\
			& {\rm s.t:} \text{ }
			   {{2\gamma_k}}/{{\overline{SNR}}_k} \le \varepsilon_k^{\rm th},\text{ } \forall k \in {\mathcal K},  \text{ } B^{\min}\leq b_k \leq B^{\max}, \text{ } b_k  \in \mathbb{Z^+},\forall k\in {\mathcal K}, \text{ } 0 < P_u \leq P_{U}
		\end{alignat}
	\end{subequations}
	
	As (\ref{eq:UAV_layer_opt}) is a mixed-integer optimization problem,  {we decompose it into two iteratively optimized subproblems}, i.e., \emph{UAV transmit power control} subproblem and \emph{UAV blocklength optimization} subproblem, such that (\ref{eq:UAV_layer_opt}) can be effectively solved. 
	
	
	\textbf{2-a) UAV transmit power control:}
	Assuming that the values of the blocklength of $b_k$, $\forall k$, ${P}_b$, and $b_u$ are given, we can then formulate the UAV transmit power control subproblem as
	\begin{equation}\label{eq:UAV_power_control}
			  \max \limits_{P_u} \text{ } \eta_{\rm EE} = \min  \limits_{k \in {\mathcal K}} \text{ }  \displaystyle  \frac{R_u(1-\epsilon_u)+R_k(1-\epsilon_k)}{{\bar p}_b+{\bar p}_u}  
			\text{ } {\rm s.t:} \text{ }
   {{2\gamma_k}}/{{\overline{SNR}}_k} \le \varepsilon_k^{\rm th}, \text{ } 0 < P_u \leq P_{U}
	\end{equation}
	
	The objective function of (\ref{eq:UAV_power_control}) is a complex fraction, which makes it hard to be directly optimized. The following Lemma shows that (\ref{eq:UAV_power_control}) is equivalent to (\ref{eq:equi_reform_UAV_power_control}).
 \begin{lemma}\label{lem_UAV_transmit_power}
     \rm{One can obtain the optimum of (\ref{eq:UAV_power_control}) by solving the following rotated quadratic cone programming problem 
\begin{subequations}\label{eq:equi_reform_UAV_power_control}
		\begin{alignat}{2}
			&  \max \limits_{P_u,y} \text{ } y  \\
			\allowdisplaybreaks[4]
			& {\rm s.t:} \text{ }
			 \displaystyle {{2\gamma_k} \sigma_{k}^2}/({|g_k^L|^2 {\bar p}_uP_U}) \leq 1+({R_u(1-\epsilon_u)-y-\eta_{\rm EE}({{\bar p}_b+{\bar p}_u})})/{R_k},\forall k\in {\mathcal K} \\
            & \displaystyle {\bar p}_u \geq {{2\gamma_k} \sigma_{k}^2}/({|g_k^L|^2 \varepsilon^{\rm th}_kP_U}),\forall k\in {\mathcal K} \\ 
			& 0 < P_u \leq P_{U}
		\end{alignat}
	\end{subequations}
 }
 \end{lemma}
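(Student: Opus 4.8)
The plan is to turn the max--min fractional program (\ref{eq:UAV_power_control}) into the convex rotated quadratic cone program (\ref{eq:equi_reform_UAV_power_control}) by three successive reductions: exploiting the common denominator, introducing an epigraph variable $y$ together with a Dinkelbach-type target $\eta_{\rm EE}$ to absorb both the inner minimization and the fraction, and finally recognizing the residual $1/\bar p_u$ term as a rotated second-order cone constraint. First I would insert the approximate expected DEP obtained in (\ref{eq_approx_epsilon}), namely $\bar\varepsilon_k\approx 2\gamma_k/\overline{SNR}_k$ with $\overline{SNR}_k=P_u|g_k^L|^2/\sigma_{k}^2$ and $P_u=\bar p_u P_U$, so that the normalized DEP $\epsilon_k$ and hence the per-user effective rate $R_k(1-\epsilon_k)$ become explicit functions of the single scalar $\bar p_u$, carrying a term of the form $R_k\bigl(1-\tfrac{2\gamma_k\sigma_{k}^2}{|g_k^L|^2\bar p_u P_U}\bigr)$. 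Under this same substitution the reliability requirement $2\gamma_k/\overline{SNR}_k\le\varepsilon_k^{\rm th}$ rearranges verbatim into the lower bound $\bar p_u\ge 2\gamma_k\sigma_{k}^2/(|g_k^L|^2\varepsilon_k^{\rm th}P_U)$, which is exactly (\ref{eq:equi_reform_UAV_power_control}b), while the budget $0<P_u\le P_U$ is carried over unchanged as (\ref{eq:equi_reform_UAV_power_control}c).

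Next I would handle the objective. Since $R_u(1-\epsilon_u)$, $\bar p_b$ and $\bar p_u$ are all independent of $k$, the inner minimization acts only on $R_k(1-\epsilon_k)$, so the fractional max--min value equals $\bigl(R_u(1-\epsilon_u)+\min_k R_k(1-\epsilon_k)\bigr)/(\bar p_b+\bar p_u)$. Treating $\eta_{\rm EE}$ as the Dinkelbach target converts the fractional lower bound into the subtractive condition $R_u(1-\epsilon_u)+R_k(1-\epsilon_k)-\eta_{\rm EE}(\bar p_b+\bar p_u)\ge y$ for every $k\in\mathcal K$, with $y$ the epigraph slack to be maximized. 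Dividing this inequality by $R_k>0$ and collecting the $\tfrac{2\gamma_k\sigma_{k}^2}{|g_k^L|^2\bar p_u P_U}$ term on the left reproduces exactly (\ref{eq:equi_reform_UAV_power_control}a). The equivalence then rests on the standard Dinkelbach fixed point: at $\eta_{\rm EE}=\eta_{\rm EE}^{*}$ the maximal slack is $y^{*}=0$, so the $P_u$ attaining it is precisely the optimizer of (\ref{eq:UAV_power_control}); I would invoke the strict monotonicity of $y^{*}(\eta_{\rm EE})$ to guarantee that solving (\ref{eq:equi_reform_UAV_power_control}) (iterating on $\eta_{\rm EE}$ if necessary) recovers the original optimum in both directions.

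The main obstacle is the final reduction: showing that (\ref{eq:equi_reform_UAV_power_control}a) is genuinely a \emph{convex} rotated quadratic cone constraint and not a generic nonconvex quadratic one. Because $\bar p_u>0$ I would multiply (\ref{eq:equi_reform_UAV_power_control}a) through by $\bar p_u$, which preserves the inequality direction and leaves a constraint of the form $c_k\le \bar p_u\,\ell_k(\bar p_u,y)$ with $c_k=2\gamma_k\sigma_{k}^2/(|g_k^L|^2P_U)$ and $\ell_k$ affine in $(\bar p_u,y)$. The delicate point is the curvature of the resulting quadratic in $\bar p_u$, which carries a $-\eta_{\rm EE}\bar p_u^{2}/R_k$ term and a $\bar p_u\,y$ cross term; I must verify its sign so that the set $\{c_k\le \bar p_u\,\ell_k\}$ admits a rotated second-order cone representation $\|\cdot\|_2\le(\cdot)$ on the region $\bar p_u>0$ and that this cone is nonempty. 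Establishing this curvature/representability, together with the elementary set equality between the feasible region of (\ref{eq:equi_reform_UAV_power_control}) and that of (\ref{eq:UAV_power_control}) under the DEP substitution, is the crux; once it is in place, convexity plus the Dinkelbach fixed-point argument deliver the claimed equivalence.
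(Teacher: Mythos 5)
Your proposal is correct and follows essentially the same route as the paper's proof in Appendix E: a Dinkelbach-type fractional-to-subtractive transformation with the fixed-point condition that the optimal slack satisfies $y^{\star}=0$ at $\eta_{\rm EE}=\eta_{\rm EE}^{\star}$, an epigraph variable $y$ absorbing the inner minimization over $k$, and verbatim carryover of the DEP constraint $\bar p_u \ge 2\gamma_k\sigma_k^2/(|g_k^L|^2\varepsilon_k^{\rm th}P_U)$ and the power budget. The one step you flag as the crux resolves positively (the paper merely asserts it): multiplying constraint (\ref{eq:equi_reform_UAV_power_control}b) of the Lemma by $\bar p_u>0$ gives $c_k+(\eta_{\rm EE}/R_k)\bar p_u^2 \le \bar p_u t_k$ with $c_k=2\gamma_k\sigma_k^2/(|g_k^L|^2 P_U)$ and $t_k$ affine in $y$, and since $\eta_{\rm EE}>0$ the left side is $\bigl\|\bigl(\sqrt{c_k},\sqrt{\eta_{\rm EE}/R_k}\,\bar p_u\bigr)\bigr\|_2^2$, so the constraint is precisely a rotated second-order cone $\|x\|_2^2\le \bar p_u t_k$ with $\bar p_u,t_k\ge 0$ (nonnegativity of $t_k$ being forced by the positive left-hand side).
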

 \begin{proof}
		Please refer to Appendix E.
	\end{proof}
Now, (\ref{eq:equi_reform_UAV_power_control}) can be efficiently optimized by some commercial optimization tools such as MOSEK. 
	Further, we can conclude the steps of optimizing (\ref{eq:UAV_power_control}) in the Algorithm \ref{alg_UAV_power_control}. 
	\begin{algorithm}
		\caption{UAV Transmit Power Control}
		\label{alg_UAV_power_control}
		\begin{algorithmic}[1]
			\STATE \textbf{Initialization:} Initialize $\eta_{\rm EE}^{{(0)}}$ and $y^{(0)}$. Let error tolerance parameter $\Delta_r = 1{\rm e}-3$, maximum iterations $r_{\rm max} = 50$, and $r = 0$. Input $\{b_k\}$ and the optimized $b_u$ and $P_b$. 
			\REPEAT
			\STATE Given $\eta_{\rm EE}^{{(r)}}$, solve (\ref{eq:equi_reform_UAV_power_control}) to obtain the optimized UAV transmit power ${P}_u^{(r+1)}$ and $y^{(r+1)}$ using the MOSEK tool.
		
		    \STATE Given ${P}_u^{(r+1)}$, update $\eta_{\rm EE}^{{(r+1)}}$ by
		    \begin{equation}\label{eq_}
		    	\displaystyle  \eta_{\rm EE}^{{(r+1)}}={\min \limits_{k \in {\mathcal K}} \left\{ R_u(1-\epsilon_u)+R_k(1-{{2\gamma_k} \sigma_{k}^2}/({|g_k^L|^2 {\bar p}_u^{(r+1)}P_U}))) \right\}}/({{\bar p}_b+{\bar p}_u^{(r+1)}})
		    \end{equation}
			\UNTIL{The convergence criterion $|y^{(r+1)}-y^{(r)}|\leq \Delta_r$ is met or reach the maximum number of iterations $r_{\rm max}$.}
		\end{algorithmic}
	\end{algorithm}
	
	\textbf{2-b) UAV blocklength optimization}
	Given the UAV transmit power $P_u$, we can formulate the UAV blocklength optimization subproblem as
	\begin{subequations}\label{eq:UAV_block_opt}
		\begin{alignat}{2}
			&  \max_{\{b_k\}} \min  \limits_{k \in {\mathcal K}} \text{ }   R_k(1-\epsilon_k) 
			\text{ } {\rm s.t:} \text{ }
            B^{\min}\leq b_k \leq B^{\max}, \text{ } b_k  \in \mathbb{Z^+},\text{ } {{2\gamma_k}}/{{\overline{SNR}}_k} \le \varepsilon_k^{\rm th},\text{ } \forall k \in {\mathcal K}
		\end{alignat}
	\end{subequations}
		
	Note that (\ref{eq:UAV_block_opt}) is a {non-linear} integer programming problem. The computational complexity of solving it using some traditional optimization methods is extremely high. 
	Nevertheless, the feasible region of (\ref{eq:UAV_block_opt}) is small. Then, we can resort to the exhaustive search to achieve its optimal solution quickly. Moreover, we can search $\{b_k\}_{k=1}^{K}$ in parallel.
	
	
	\subsection{{Algorithm Design}}
 
	Finally, according to the above analysis and derivation, 
	we can summarize the main steps of the joint RIS Phase shift, Transmit Power, and Blocklength optimization (PTPB) algorithm for mitigating the formulated energy efficiency optimization problem in Algorithm \ref{alg_PTPB}. 
 
\begin{algorithm}
	\caption{Joint RIS Phase shift, Transmit Power, and Blocklength optimization, PTPB}
	\label{alg_PTPB}
	\begin{algorithmic}[1]
		\STATE \textbf{Initialization:} Run the initialization steps in R-OAMP and Algorithm \ref{alg_UAV_power_control}. Let $c_{\rm max} = 100$, $r = 0$. $I_{\rm max} = 10$.
		\STATE \textbf{Channel estimation and RIS phase shift optimization:}
		\STATE Call the channel estimation approach (R-OAMP) to estimate the HAP-UAV channel gain $\bm h$. 
		\STATE Compute $\bm v$ by (\ref{eq_precoding_vec}) and optimize (\ref{eq:optimize_RIS}) using MOSEK to obtain $\omega_{l_0}$. Compute $\theta_n$ by (\ref{eq_new_refl_path_final_phi}). 
		\STATE Given the obtained $\bm h$, $\bm v$, and $\bm \Theta$, compute $SNR_u$ .
		\STATE \textbf{BS-Layer optimization:}
		\REPEAT 
		\STATE Given $b_u^{(i)}$ compute $P_b(b_u^{(i)})^{(i+1)}$ by (\ref{eq_optimal_BS_power}).
		\STATE Given $P_b(b_u^{(i)})^{(i+1)}$, and any non-zero $P_U$ and $\{b_k\}$, exhaustively search $b_u^{(i+1)}$ such that (\ref{eq:BS_blocklength_opt}a) is maximized, subject to (\ref{eq:BS_blocklength_opt}b). Update $i = i+ 1$.
		\UNTIL{Converge or reach the maximum number of iterations $I_{\rm max}$.}
		\STATE \textbf{UAV-Layer optimization:}
		\REPEAT
		\STATE Call Algorithm \ref{alg_UAV_power_control} to obtain $P_u^{(c+1)}$. 
		\FOR{each $k \in \{1, 2, \ldots, K\}$ in parallel}
        \STATE Given $P_u^{(c+1)}$, solve the following problem using an exhaustive search method
        \begin{equation}\label{eq:BS_block_opt_k}
			\max_{{b_k}}  \text{ }   R_k(1-\epsilon_k) \text{ }
			 {\rm s.t:} \text{ }
			 {{2\gamma_k}}/{{\overline{SNR}}_k} \le \varepsilon_k^{\rm th},
			\text{ } B^{\min}\leq b_k \leq B^{\max}, \text{ } b_k  \in \mathbb{Z^+}
	\end{equation}
        \ENDFOR
        \STATE Update $c = c+1$.
		\UNTIL{Converge or $c = c_{\rm max}$.}
	\end{algorithmic}
\end{algorithm}

{The convergence analysis of Algorithm \ref{alg_PTPB} can be decomposed into the convergence analysis of the off-grid parameter updating module, the \emph{BS-Layer optimization} module, and the \emph{UAV-Layer optimization} module. 
In the off-grid parameter updating module, we utilize the in-exact majorization-minimization (MM) approach to approximate the optimal off-grids $\Delta\bm{ \phi}^*$, which guarantees to converge to a stationary point of (\ref{eq_ML}) \cite{wu1983convergence}. 
As presented in Lemma 2, we can obtain the optimum value of the \emph{BS-Layer optimization} problem in two iterations.  
Besides, in the \emph{UAV-Layer optimization} module, we adopt an iterative optimization scheme to solve the \emph{UAV-Layer optimization} problem. The reader can refer to \cite{DBLP:journals/jsac/YangXGQCC21} for a similar convergence proof of the iterative optimization scheme, and we omit the proof here for brevity.}

The computational complexity of Algorithm \ref{alg_PTPB} consists of three contributors including that of the channel estimation module, the RIS phase shift optimization module, and the joint power and blocklength optimization module. And the total complexity is $O(I^{\rm max}_c(PN^2+PN+N) + I^{\rm max}(S_1(B^{\rm max}-B^{\rm min})+\log_2(B^{\rm max}-B^{\rm min})) + (1+L)^{3.5} + c^{\rm max}(r^{\rm max}2^{3.5}+S_2(B_u^{\rm max}-B_u^{\rm min})+\log_2(B_u^{\rm max}-B_u^{\rm min}))) $ in the worse-case \cite{DBLP:books/daglib/0094154}, where $S_1$ is the complexity of computing (\ref{eq:BS_blocklength_opt}a) and $S_2$ is the complexity of computing (\ref{eq:UAV_block_opt}). Please refer to Appendix F for the detailed discussion on the algorithm complexity.

	\section{Simulation results}
	\subsection{Comparison Algorithms and Simulation Parameters}
	{In this section, we evaluate the performance of the proposed channel estimation approach, RIS
		phase shift optimization strategy, and resource optimization algorithm, respectively.}
	For this aim, the following baseline algorithms are considered: 
		 \textbf{OMP} \cite{karabulut2004sparse}: The OMP approach is an improved version of the well-known matching pursuit approach.
		 \textbf{SBL} \cite{DBLP:journals/tsp/DaiLL18}: The sparse Bayesian learning (SBL) approach recovers sparse signals based on a sparse Bayesian learning theory.
		 \textbf{SP} \cite{DBLP:journals/tit/DaiM09a}: The subspace pursuit (SP) approach identifies the signal support based on the maximum correlation criterion and is robust to measurement noises.
		 \textbf{Random-Phase}: In this algorithm, each element of the HAP-RIS is assigned with a random phase shift.
		 \textbf{Zero-Phase}: In this algorithm, all elements of the HAP-RIS are assigned with a zero phase shift.
		 {\textbf{Exhaustive-Phase}: It exhaustively searches for the AoD of the
		optimal phase alignment path ${\omega_{l_0}}$ in the interval $[0, \pi/2]$.}
		 \textbf{MTP}: The difference between the maximum transmit power-based (MTP) algorithm and the proposed algorithm lies in that MTP adopts the maximum BS transmit power and the maximum UAV transmit power.
		 \textbf{MBL}: The difference between the maximum blocklength-based (MBL) algorithm and the proposed algorithm lies in that MBL is configured with the maximum BS blocklength and the maximum UAV blocklength. 
	
	We consider a square area $\mathcal{G}$ of size $500 \times 500$ m$^2$, the location of the center point of which is $[80 \ {\rm km}, 0 \ {\rm km}]$. The robots are randomly distributed in $\mathcal{G}$. 
  {Both path loss of the BS-HAP channel and large-scale channel fading of the HAP-UAV channel are generated by the propagation model in{\cite{DBLP:journals/twc/AlsharoaA20}}. 
	The HAP-UAV small-scale channel realization consists of $L = 8$ propagation paths to simulate the sparsity characteristic, 
	and AoDs $\{\omega_l\}$ are assumed to follow an i.i.d. uniform distribution in an angular interval of ${\pi/12}$ centered on the LoS direction between the HAP and the UAV due to the limited scattering environment around the UAV. Considering the requirements of HAP-UAV channel sparsity along with BS-HAP-UAV long-distance propagation, 
 we set the carrier frequency $f_{\rm BS}=6$ GHz \cite{DBLP:journals/corr/abs-2104-01723}. As for the UtG channel model, we assume that the channel fading consists of large–scale and small-scale channel fading \cite{DBLP:journals/comsur/KhuwajaCZAD18}. We leverage the channel model in \cite{DBLP:journals/wcl/Al-HouraniSL14} and thereof channel parameters of a suburban scenario to calculate the UtG large-scale channel fading.
    The rayleigh fading model is utilized to model UtG small-scale channel fading{\footnote{{As the UtG channel is not sparse\cite{DBLP:journals/comsur/KhuwajaCZAD18}, we cannot leverage the proposed channel estimation approach to estimate its small-scale fading.}}}.
	The inter-element spacing of the RIS is assumed to be half-wavelength. More system parameters are listed as below:
	$M=32$, $N=128$, ${\bm p}_H=[1 ,0, 18]^{\rm T}$ km, $K=10$,	${\bm p}_U=[80, 0, 0.05]^{\rm T}$ km, $F_B=80$ bits, $B^{\min}_u=100$ bits, $B^{\min}=100$ bits, $B^{\max}_u=1000$ bits, $B^{\max}=1000$ bits, $P_B=120$ W, $P_U=0.5$ W, $f_{\rm UAV}=2$ GHz \cite{DBLP:journals/corr/abs-2104-01723}, $\varepsilon^{\rm th}_u=0.00005$, $\varepsilon^{\rm th}_k=0.00005$, 
 $G=4$ dB, $\sigma_0^2=-134$ dBm, $\sigma^2_k=-143$ dBm, $\zeta=0$, $\rho=1$, $ F_k=80$ bits \cite{DBLP:journals/icl/PanRDEN19}.}
	
	\subsection{Performance Evaluation}
	{We design extensive simulations to} comprehensively evaluate the performance of the proposed algorithm, particularly including the verification of the accuracy of the proposed channel estimation approach and the superiority of the proposed resource allocation algorithm.  
 All comparison approaches and algorithms in the simulations are executed for 200 times, and the final simulation result is the averaged one. 
	
	\subsubsection{Results of Channel Estimation}
	In this simulation, we compare the proposed R-OAMP approach with other channel estimation approaches, including OMP, SBL, and SP. The parameters $\left\{ {\lambda ,\varsigma ,\rho } \right\}$ are automatically updated by adopting an expectation-maximization algorithm during each iteration {\cite{DBLP:journals/tsp/ZinielS13a}}. 
	The metric, normalized mean square error (NMSE), is selected to evaluate the performance of channel estimation approaches, which is defined as
	\begin{equation}
		NMSE(\hat {\bm h}) \buildrel \Delta \over = {{{{\| {\hat {\bm h} - {\bm h}} \|}^2}}}/{{{{\left\| {\bm h} \right\|}^2}}}
	\end{equation}
	where ${\hat {\bm h}}$ is the estimation of ${\bm h}$.
	
	We plot the tendency of estimation accuracy of all compared channel estimation approaches under different $SNR_u \in \{0,4,8,12,16,20\}$, as shown in Fig. \ref{fig_NMSE}. From this figure, we can observe that: 1) For all channel estimation approaches, their estimation performance will be improved with an increasing $SNR_u$. 2) Compared with other estimation approaches, the estimation accuracy is improved by at least {$1.46$ dB} using R-OAMP. 3) When there are more RIS reflecting elements, which indicates that the HAP-UAV channel becomes sparser, the estimation accuracy of the proposed approach is further improved. 
	\begin{figure}[!t]
		\centering
		\subfigure[$N=128$]{\includegraphics[width=2.4in]{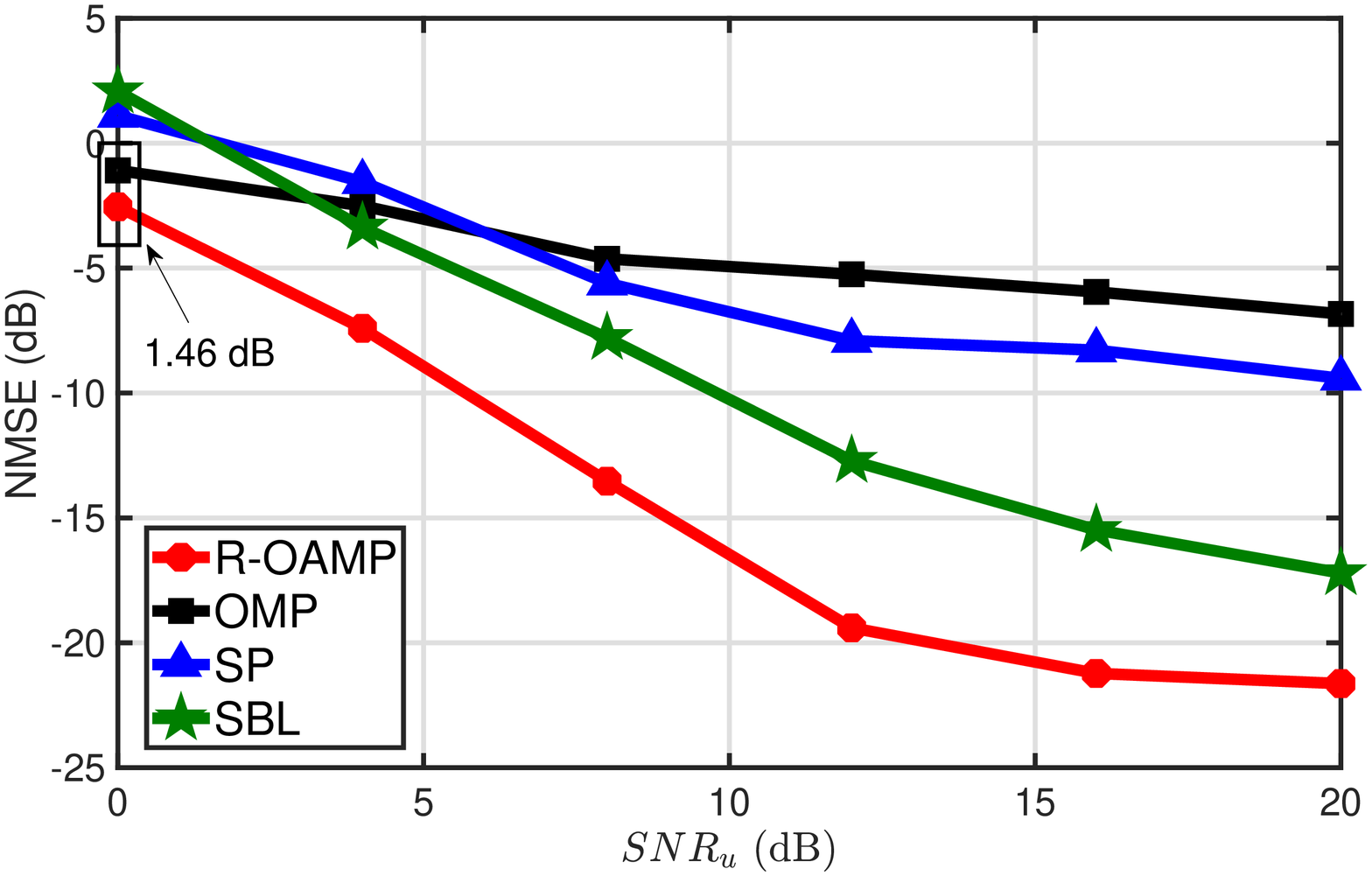}%
			\label{fig_NMSE_first_case}}
		\hfil
		\subfigure[$N=256$]{\includegraphics[width=2.4in]{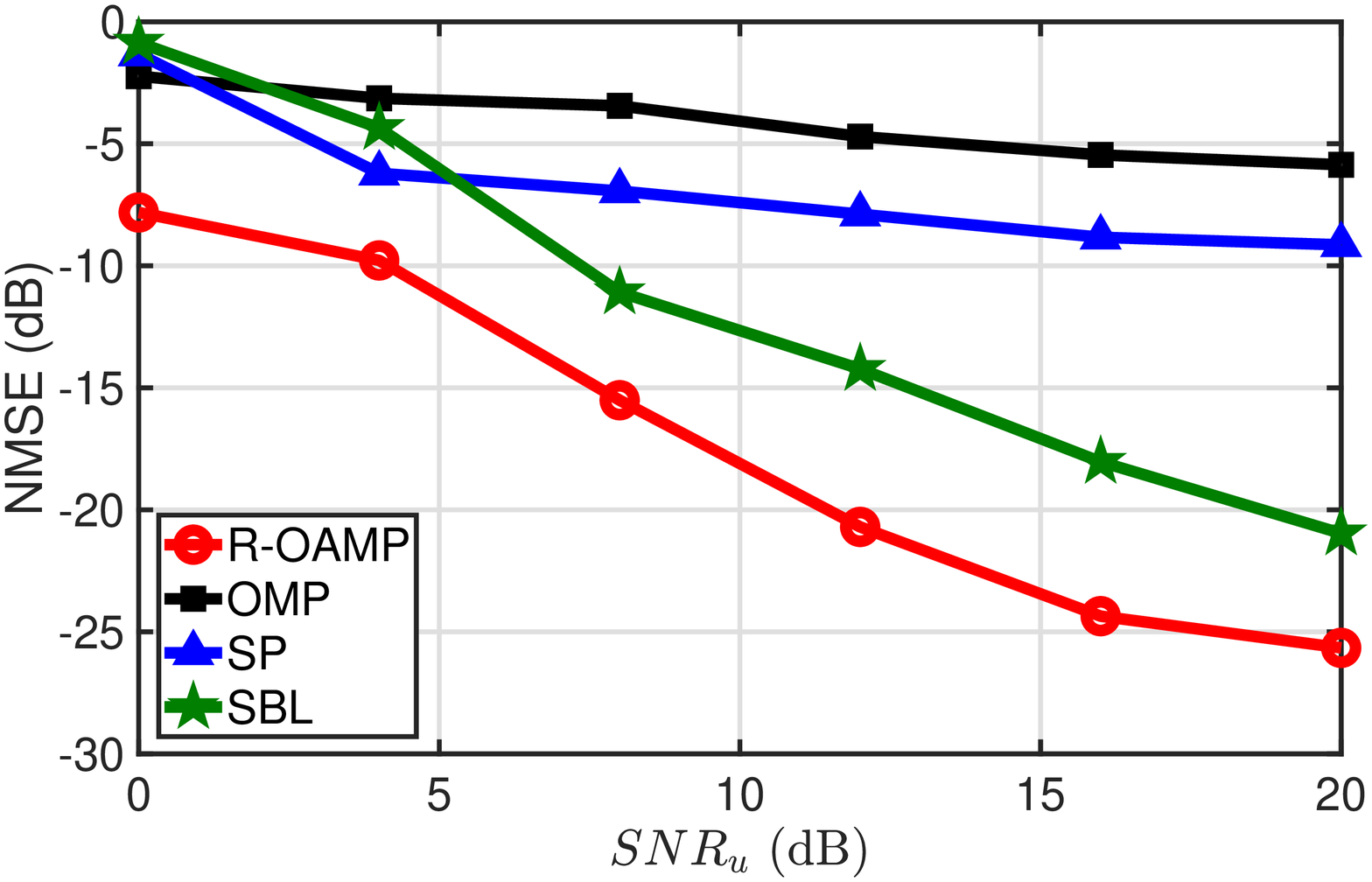}%
			\label{fig_NMSE_second_case}}
		\caption{NMSE versus received SNR of UAV ${\rm SNR}_{u}$, $P=48$.}
		\label{fig_NMSE}
	\end{figure}
	
	The estimation performance of all compared channel estimation approaches versus the number of pilot sequences $P$ is illustrated in Fig. \ref{fig_NMSE_versus_pilot}. We can obtain the following observations from this figure: 1) Except for OMP, the obtained NMSE of the rest estimation approaches decreases sharply with an increasing number of pilot sequences. For example, the obtained NMSE by R-OAMP decreases by {$15.24$ dB} when $P$ increases from $30$ to $70$. 2) R-OAMP can always achieve the smallest NMSE under different $P$ and $N$. It demonstrates that R-OAMP can achieve more accurate channel estimation results with lower pilot overhead by exploiting the sparsity prior of the HAP-UAV channel in the angular domain. 
	\begin{figure}[!t]
		\centering
		\subfigure[$N=128$]{\includegraphics[width=2.4in]{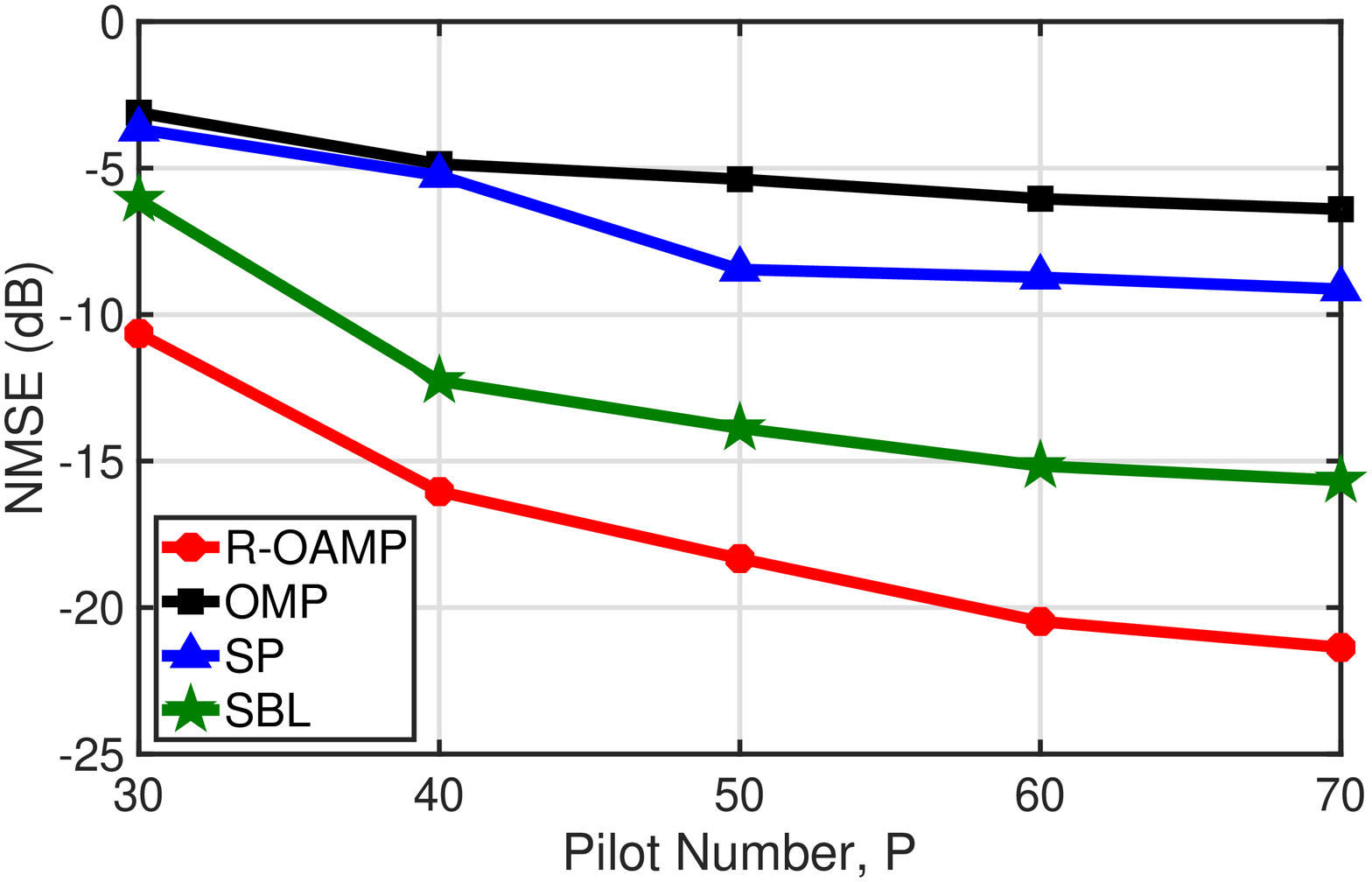}%
			\label{fig_NMSE_first_case_p}}
		\hfil
		\subfigure[$N=256$]{\includegraphics[width=2.4in]{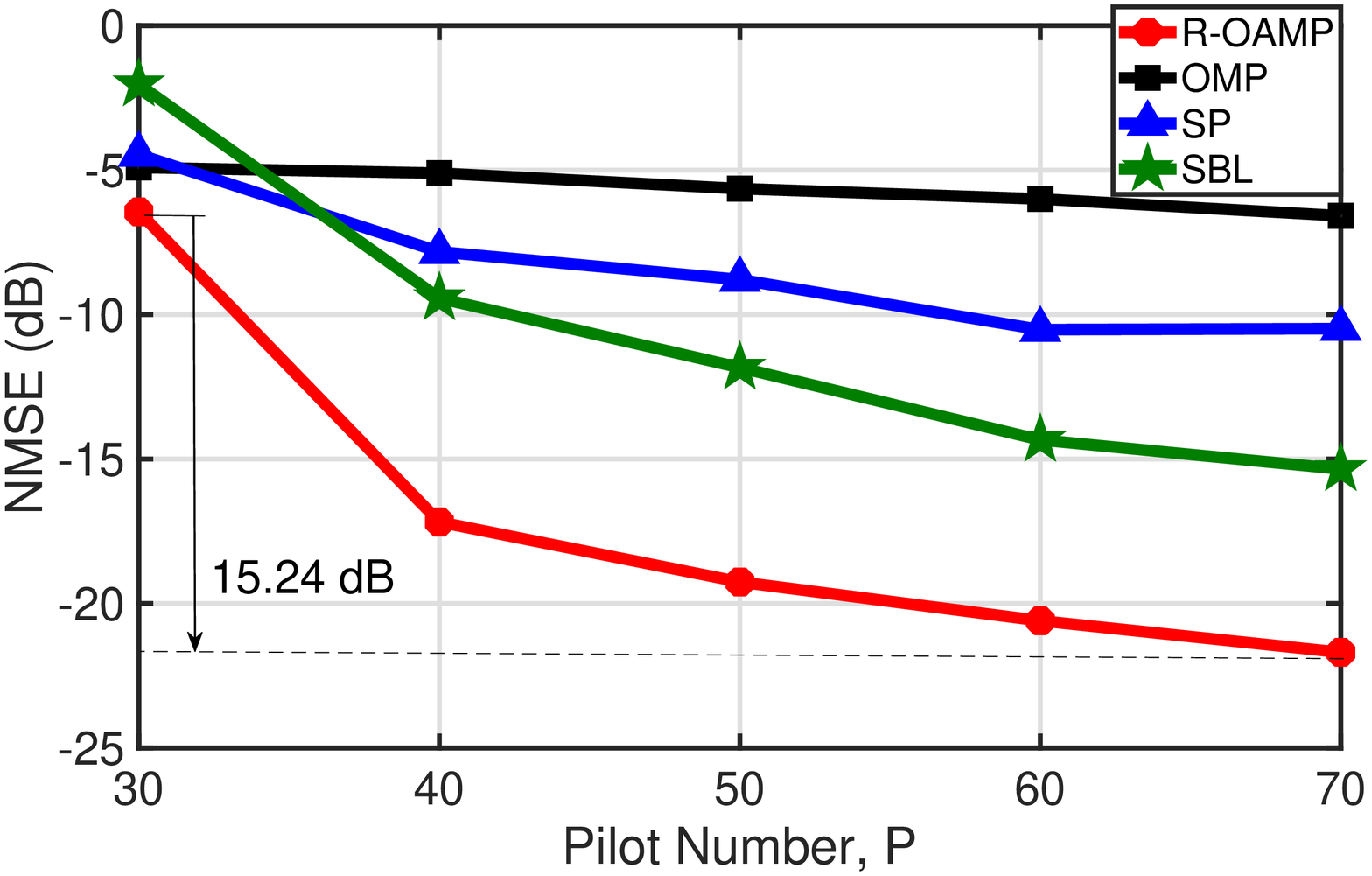}%
			\label{fig_NMSE_second_case_p}}
		\caption{NMSE versus the number of pilot sequences $P$, ${\rm SNR}_{u}=15 \rm dB$.}
		\label{fig_NMSE_versus_pilot}
	\end{figure}
	
	\subsubsection{Results of RIS Phase Shift Optimization}
	In this simulation, we compare the proposed algorithm with two benchmarks, i.e., Random-Phase and Zero-Phase, to verify its effectiveness regarding RIS phase shift optimization. 
	
	The tendency of the cascaded BS-HAP-UAV channel gain versus the number of RIS passive reflecting elements {$N \in \{96, 128, 160, 192, 224, 256\}$} is plotted in Fig. \ref{fig_ris_channel_gain}. We can observe from this figure that: 1) PTPB outweighs the benchmarks and improves the BS-HAP-UAV channel gain by at least {$20.36$} dB. {2) PTPB achieves a cascaded channel gain that is close to the channel gain obtained by the Exhaustive-Phase algorithm, and their difference is less than $3$ dB.}
	3) The obtained channel gain by PTPB monotonically increases with the number of RIS elements, while the benchmarks fail to unlock the significant advantage of RIS in terms of addressing the channel fading issue. 
	
	\begin{figure}[!t]
		\begin{minipage}[t]{0.45\textwidth}
			\centering
			\includegraphics[width=2.4in]{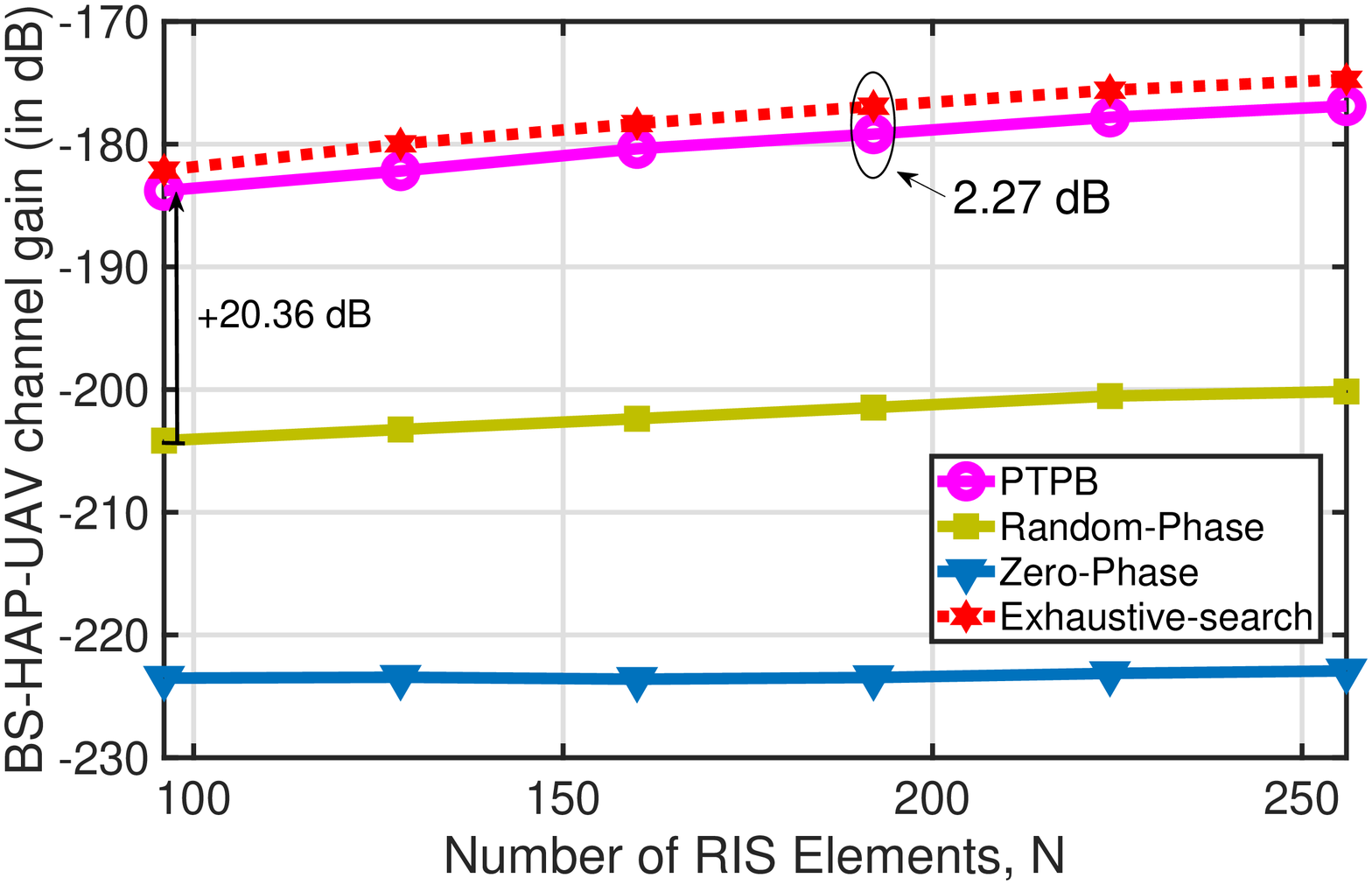}
			\caption{{BS-HAP-UAV channel gain versus the number of RIS elements $N$.}}
			\label{fig_ris_channel_gain}
		\end{minipage}
		\hspace{0.05\linewidth}
		\begin{minipage}[t]{0.45\textwidth}
			\centering
			\includegraphics[width=2.4in]{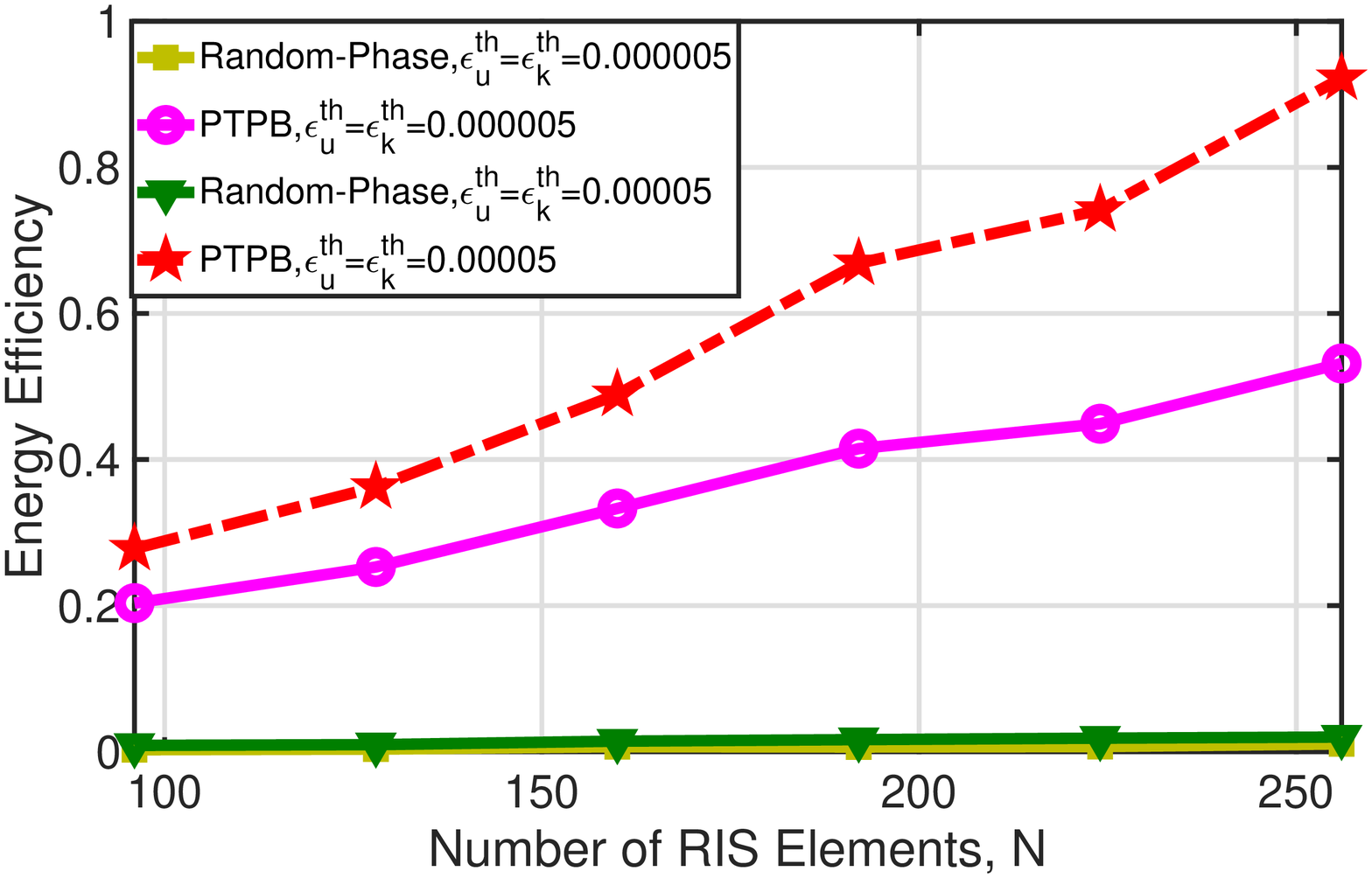}
			\caption{{Achieved energy efficiency versus the number of RIS elements $N$.}}
			\label{fig_ris_phaseshift}
		\end{minipage}
	\end{figure}

	
	Besides, we adopt the evaluation metric, energy efficiency computed by (\ref{eq:formualted_problem}a), to further verify the effectiveness of the designed RIS phase shift optimization strategy. We plot the obtained energy efficiency of the comparison algorithms under different numbers of RIS elements and reliability constraints in Fig. \ref{fig_ris_phaseshift}. By referring to Fig. \ref{fig_ris_channel_gain}, the achieved BS-HAP-UAV channel gains by both Random-Phase and Zero-Phase are extremely small. The results in Fig. \ref{fig_ris_phaseshift} illustrate that the proposed algorithm can achieve higher energy efficiency and satisfy the reliability requirement with a lower transmit power budget.  
 

	From the results in Figs. \ref{fig_ris_channel_gain} and \ref{fig_ris_phaseshift}, we can draw the conclusion that PTPB can effectively deal with the phase alignment issue in a RIS-integrated multipath EM propagation environment.
	
	\subsubsection{Results of Transmit Power and Blocklength Optimization}
	To verify the effectiveness of the proposed algorithm regarding joint transmit power and blocklength optimization, we compare it with two benchmark algorithms, i.e., MTP and MBL. To this end, we evaluate the impact of the number of the RIS elements $N$, {the side length of $\mathcal G$}, and the UAV transmit power budget $P_U$, respectively. 
	
	\begin{figure}[!t]
		\centering
		\subfigure[$\varepsilon^{\rm th}_u=\varepsilon^{\rm th}_k = 0.00005$]{\includegraphics[width=2.4in]{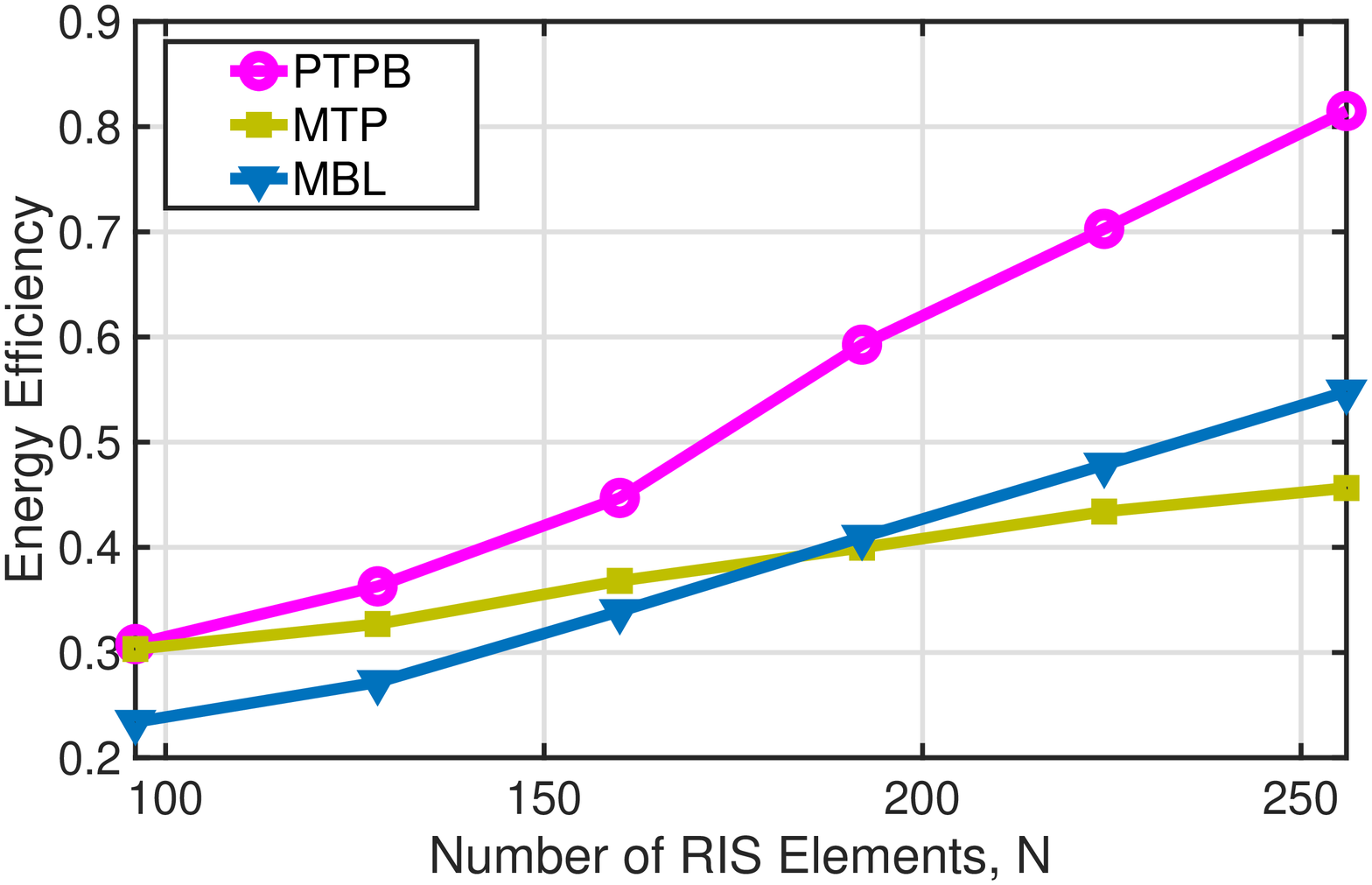}%
			\label{fig_ee_first_case_RIS}}
		\hfil
		\subfigure[$\varepsilon^{\rm th}_u=\varepsilon^{\rm th}_k = 0.000005$]{\includegraphics[width=2.4in]{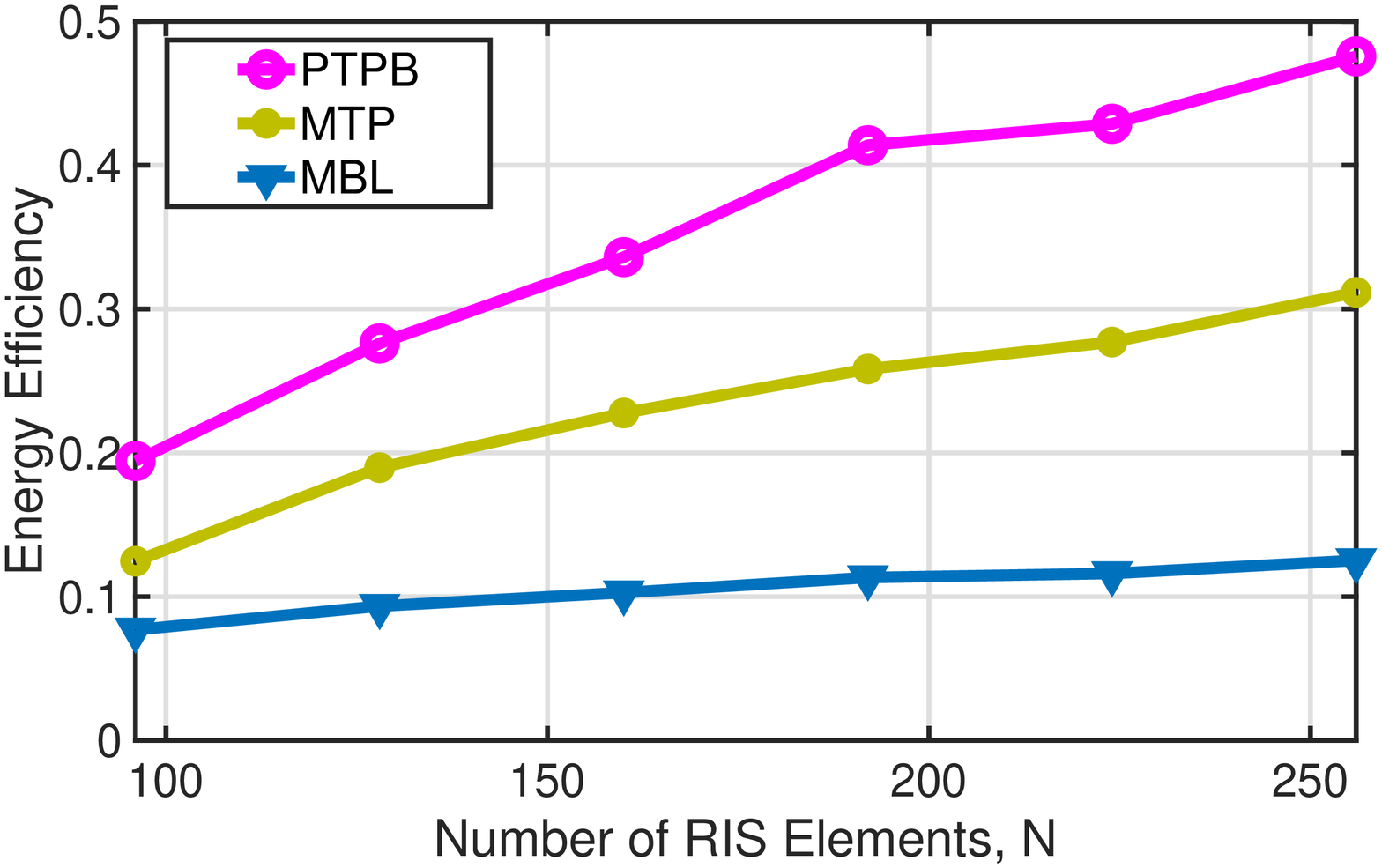}%
			\label{fig_ee_second_case_RIS}}
		\caption{Achieved energy efficiency versus the number of RIS elements $N$.}
		\label{fig_EE_versus_RIS_num}
	\end{figure}
	Fig. \ref{fig_EE_versus_RIS_num} shows the achieved energy efficiency of all comparison algorithms over the number of RIS elements. From this figure, we can obtain the following observations: 1) PTPB outperforms the other two benchmarks under diverse $N$, $\varepsilon^{\rm th}_u$, and $\varepsilon^{\rm th}_k$. The achieved energy efficiency by {all comparison algorithms} increases with an increasing number of RIS elements. This is due to the improvement in the {BS-HAP-UAV} channel quality, and then fewer resources (including transmit power and blocklength) are required to satisfy URLLC requirements. 
	2) When a more stringent reliability requirement, i.e., {$\varepsilon^{\rm th}_u=\varepsilon^{\rm th}_k = 0.000005$}, is enforced, the obtained energy efficiency of all comparison algorithms will decrease. Obviously, more resources will be consumed to satisfy the more stringent reliability requirement. 
	{3) Under a more stringent reliability constraint, the energy efficiency obtained by MBL does not increase effectively with improved channel conditions. In this case, the transmit power cannot be significantly reduced owing to the strictly limited blocklength.}

	\begin{figure}[!t]
		\centering
		\subfigure[$\varepsilon^{\rm th}_u=\varepsilon^{\rm th}_k = 0.00005$]{\includegraphics[width=2.4in]{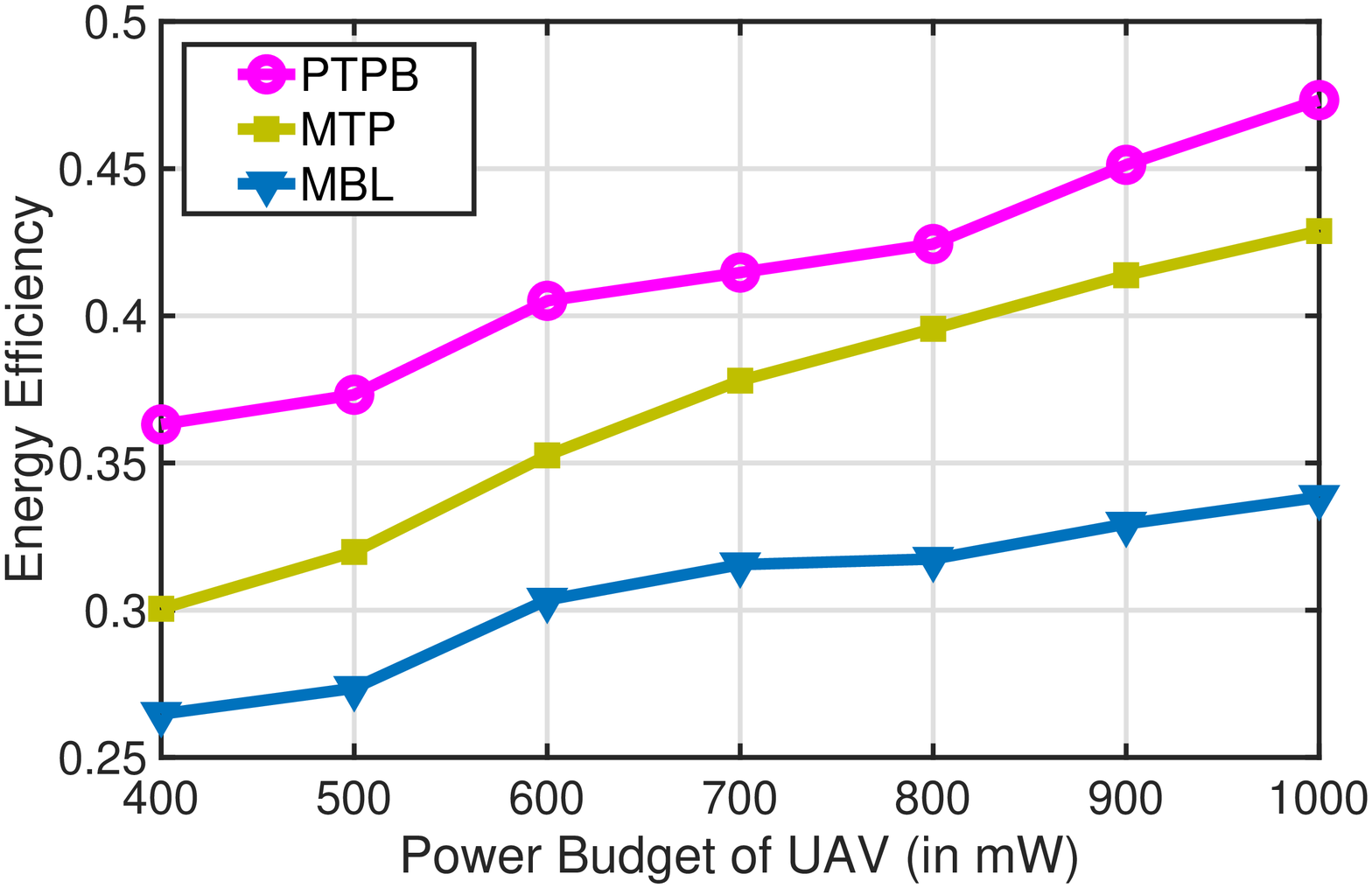}%
			\label{fig_ee_first_case_UE}}
		\hfil
		\subfigure[$\varepsilon^{\rm th}_u=\varepsilon^{\rm th}_k = 0.000005$]{\includegraphics[width=2.4in]{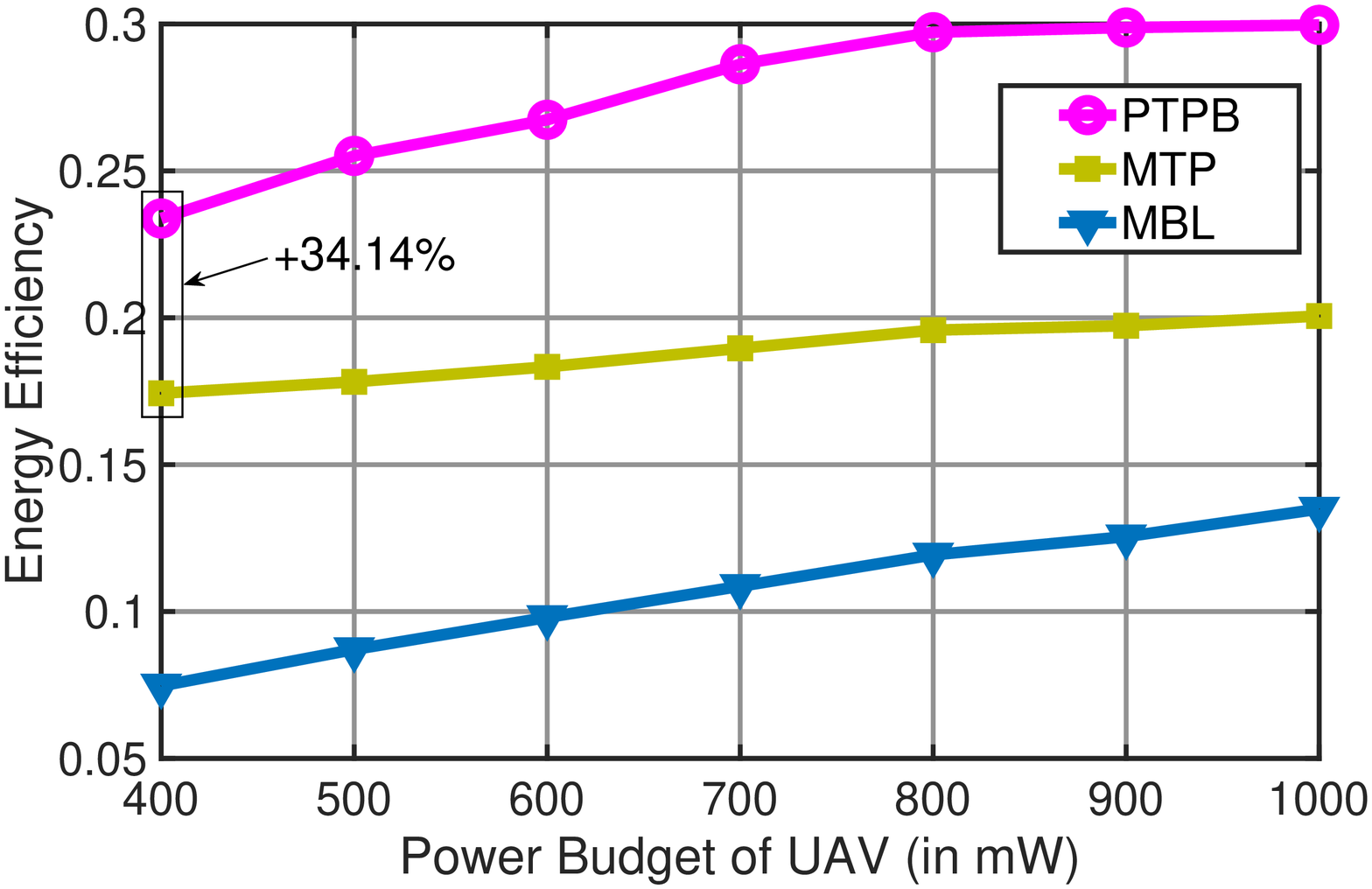}%
			\label{fig_ee_second_case_UE}}
		\caption{Achieved energy efficiency versus the transmit power budget of UAV.}
		\label{fig_EE_versus_UAV_num}
	\end{figure}
	Finally, we plot the tendency of the obtained energy efficiency of all comparison algorithms over the transmit power budget of UAV {$P_U \in \{400,500,600,700,800,900,1000\}$ mW} in Fig. \ref{fig_EE_versus_UAV_num}. In this simulation, we set the maximum blocklength {$ B_{u}^{max}=B^{max}=1000$ bits.} The following observations can be obtained from this figure: 1) PTPB achieves the highest energy efficiency. For instance, under a tighter reliability constraint, the system energy efficiency can be improved by at least {$34.14\%$} by executing the proposed PTPB algorithm. 
	{2) The performance of MBL degrades more compared to MTP when a more stringent reliability constraint is imposed. 
    The reasons are as follows: compared to the case of $\varepsilon^{\rm th}_u=\varepsilon^{\rm th}_k = 0.00005$, MBL needs to allocate higher transmit power to satisfy the tighter reliability requirement in the case of $\varepsilon^{\rm th}_u=\varepsilon^{\rm th}_k = 0.000005$. Owing to the limited feasible region of blocklength and the adoption of the maximum transmit power, the obtained energy efficiency by MTP will not be significantly reduced when a more stringent reliability requirement is imposed. }
	
	Summarily, from the above simulation results, we can conclude that joint RIS phase shift, transmit power, and blocklength optimization can significantly improve the system energy efficiency. 
	
	\section{Conclusion}
	This paper proposed to deploy a RIS-integrated NSIN to provide energy-efficient URLLC services for remote robots, which leveraged the advantages of RIS and power control. For this aim, we formulated the energy-efficient URLLC service provision problem as a resource optimization problem aiming at maximizing the effective throughput and minimizing the system power consumption, subject to URLLC and physical resource constraints. 
This problem was difficult to be mitigated due to the unknown channel model and intractable analysis. 
To handle these challenges, we developed a novel channel estimation approach, derived an analysis-friendly expression of DEP, and decomposed the problem by analyzing its monotonicity. 
Based on the above results, we proposed a joint RIS phase shift, transmit power, and blocklength optimization algorithm, the computational complexity of which was also analyzed. 
Simulation results verified that the developed channel estimation approach could accurately estimate HAP-UAV channels and the proposed resource optimization algorithm was more energy-efficient than benchmarks. 
	This paper investigated a stationary scenario, and extending it to a mobility scenario (e.g., the UAV is moving) will be an interesting topic worthy of study in the near future. In this case, channel tracking and RIS phase shift optimization approaches adapted to moving UAVs need to be designed.

\begin{appendix}
\subsection{Pilot Matrix Design}
 From \cite{DBLP:journals/spl/MaYP15a}, we know that a partial orthogonal measurement matrix outperforms an i.i.d. Gaussian measurement matrix when exploring an orthogonal approximate message passing (OAMP) approach. Motivated by this observation, a partial discrete Fourier transform (DFT) random permutation (pDFT-RP) measurement matrix is chosen. Besides, in practice, we don't know the off-grid offsets in advance. Nevertheless, if we deploy a great number of antenna elements, the measurement matrix ${\bm{F}}(\Delta\bm {\phi}) \approx {\bm{F}}(\pmb 0)$. Therefore, we set the corresponding off-grid offsets to be zero when generating the pilot matrix.
    We then have
    \begin{equation}\label{eq_}
		{\bm {F}}(\Delta \bm{\phi}) \approx {\bm {F}}({\bm 0}) = \bm {SDR}
	\end{equation}
	where ${\bm{S}}=\left\{0,1 \right\}^{P \times N}$ is a selection matrix generated by randomly selecting and reordering $P$ rows of an ${N \times N}$ identity matrix, ${\bm{D}}\in\mathbb{C}^{N \times N}$ is the DFT matrix, ${\bm{R}}\in\mathbb{C}^{N \times N}$ is a random permutation matrix generated by randomly reordering an ${N \times N}$ identity matrix. Note that the measurement matrix comprises the pilot matrix, BS-HAP channel matrix, and RIS phase shift matrix. We can compute the channel state information (CSI) of the BS-HAP channel in advance and can impose $\theta_n = -2\pi (n-1){\bar d}_{\rm RIS}( {\sin}({\arctan(\frac{||[x_H, y_H]^{\rm T}-[x_U, y_U]^{\rm T}||_2}{|h_H - h_U|})})-{\sin}(\upsilon_{{\rm AoA}}))$, $\forall n$, in the stage of designing the pilot matrix. 
	Then, the pilot matrix to be transmitted by the BS can be expressed as 
	\begin{equation}\label{eq_}
		{{\bm{U}_0^{\rm H}}= {\bm {SDR}}({{{\bm {\bar H}}^{\rm H}}{{\bm {\Theta}}^{\rm H}}\bm {A}(\bm{0})})^{-1}}
	\end{equation}
 
 \subsection{Proof of Theorem 1}
 Unfortunately, we can't obtain the closed-form expression of the log-likelihood function ${\rm ln}p({\bm{y}},\Delta\bm{\phi})$. As a result, the gradient descent method can't be directly utilized to gain the optimal solution $\Delta\bm{ \phi}^*$. To address this challenging issue, we utilize the in-exact majorization-minimization (MM) algorithm in \cite{DBLP:journals/tsp/DaiLL18} to approximate the optimal solution. As a general form of the expectation-maximization algorithm, the key idea of the in-exact MM algorithm is to iteratively generate a progressively refined (log-likelihood) function (a lower bound of ${\rm ln}p({\bm{y}},\Delta\bm{\phi})$) and approximate the optimal solution by continuously improving the lower bound. Although the expectation-maximization algorithm can't guarantee the theoretical optimality, it can guarantee to converge to a stationary point \cite{wu1983convergence}. 
	Specifically, let the surrogate function of ${\rm ln}p({\bm{y}},\Delta\bm{\phi})$ be $r(\Delta\bm{\phi};\Delta\bm {\dot\phi})$, which indicates that this progressively refined function is constructed at a given point $\Delta\bm{\dot\phi}$. As the lower bound, $r(\Delta\bm{\phi};\Delta\bm {\dot\phi})$ should satisfy the following three conditions:
	\begin{equation}\label{eq_}
		r(\Delta\bm{\phi};\Delta\bm {\dot\phi})\leq {\rm ln}p({\bm{y}},\Delta\bm{ \phi}),\forall \Delta\bm {\phi}
	\end{equation}
	\begin{equation}\label{eq_}
		r(\Delta\bm {\dot\phi};\Delta\bm {\dot\phi}) = {\rm ln}p({\bm y},\Delta\bm {\dot\phi})
	\end{equation}
	\begin{equation}\label{eq_}
		\displaystyle \frac{\partial r(\Delta\bm {\phi};\Delta\bm {\dot\phi})}{\partial \Delta\bm{ \phi}}|_{ \Delta\bm{ \phi}=\Delta\bm {\dot\phi}} = \frac{\partial {\rm ln}p(\bm{y},\Delta\bm{\phi})}{\partial  \Delta\bm{\phi}}|_{ \Delta\bm{\phi}=\Delta\bm {\dot\phi}}
	\end{equation}
	
	It's not difficult to verify that the following constructed progressively refined function satisfies all the above three conditions
	\begin{equation}\label{eq_surrogate_r}
		\displaystyle  r(\Delta\bm{\phi};\Delta\bm {\dot\phi})=\int p({\bm{x}}|{\bm{y}},\Delta\bm{\dot \phi}){\rm ln}\frac{p({\bm{x}},{\bm{ y}},\Delta\bm{\phi})}{p({\bm{x}}|{\bm{y}},\Delta\bm{\dot \phi})}d{{\bm{x}}}
	\end{equation}
	
	With the constructed function, we can iteratively update $\Delta\bm{\phi}$ by
	\begin{equation}\label{eq_arg_max_phi}
		\Delta\bm{\phi}^{i+1}= {\rm arg} \max \limits_{\Delta\bm{ \phi}}r(\Delta\bm{ \phi};\Delta\bm{\phi}^{i})
	\end{equation}
	where $i$ represents the $i$-th iteration and $\Delta\bm{\phi}^{i}$ stands for the corresponding value of $\Delta\bm{\phi}$.
	
	Nevertheless, it's quite difficult to obtain the optimal $\Delta\bm{\phi}^{i+1}$ as (\ref{eq_arg_max_phi}) is non-convex. To solve this problem effectively, we explore a gradient ascend method, and $\Delta\bm{\phi}^{i+1}$ can be updated by 
	\begin{equation}\label{eq_update_phi}
		\displaystyle \Delta\bm{ \phi}^{i+1}= \Delta\bm{\phi}^{i} + \alpha_i\frac{\partial r(\Delta\bm{\phi};\Delta\bm{\phi}^i)}{\partial \Delta\bm{\phi}}|_{\Delta\bm{\phi}= \Delta\bm{\phi}^i}
	\end{equation}
	where $\alpha_i$ is the update stepsize. Specifically, given the estimated $\hat p({\bm{x}})=\mathcal{CN}(0;{\bm{x}}^{post}_{B},v_B^{post}\bm{I})$ at the $i$-th iteration, the surrogate function of (\ref{eq_surrogate_r}) can be computed by
	\begin{equation}
		\begin{array}{l}
			\hat r(\Delta \bm{\phi} ;\Delta {\bm {\phi} ^i}) \propto {{\rm{E}}_{\hat p(\bm x|\bm y,\Delta \bm \phi )}}[ - \frac{1}{{\sigma _e^2}}||\bm{y} - \bm F(\Delta \bm {\phi} )\bm x|{|^2}]\\
			\propto  - \frac{1}{{\sigma _e^2}}(||\bm y - \bm F(\Delta \bm \phi )\bm x_B^{post}|{|^2} + v_B^{post}{\rm{tr}}(\bm F(\Delta \bm \phi )\bm F{(\Delta \bm \phi )^{\rm H}}))
		\end{array}
	\end{equation}
	
	The derivative of $\hat r(\Delta\bm{\phi};\Delta\bm {\phi}^{i})$ in (\ref{eq_update_phi}) can be calculated as $\bm{ \psi}^{(i)}_{\Delta\bm{ \phi}}=[\psi^{(i)}(\Delta \phi_1),...,\psi^{(i)}(\Delta \phi_N)]^{T}$, where	
	\begin{equation}
		\begin{array}{l}
			{\psi ^{(i)}}(\Delta {\phi _n}) = \\
			2{\rm{Re}}({{\bm{a}}^\prime }{({{\hat \omega }_n} + \Delta {\phi _n})^{\rm H}}{{\bm{U}}_{\bm{0}}}{{\bm{\bar H}}{\bm{\Theta}}}{({{\bm{U}}_{{0}}}{{\bm{\bar H}}{\bm{\Theta}}})^{\rm H}}{\bm{a}}({{\hat \omega }_n} + \Delta {\phi _n})) \\
			\times\varphi _1^{(i)} + 2{\rm{Re}}({{\bm{a}}^\prime }{({{\hat \omega }_{{n}}} + \Delta {\phi _{{n}}})^{\rm H}}{{\bm{U}}_{{0}}}{{\bm{\bar H}}{\bm{\Theta}}}\bm \varphi _2^{(i)})
		\end{array}  
	\end{equation}
	with  $\varphi_1^{(i)}=-\frac{1}{\sigma_e^2}(|{x}^{post}_{B,n}|^2+v_B^{post})$, ${\bm{ \varphi}}_2^{(i)}=\frac{1}{\sigma_e^2}({x}^{post}_{B,n})*{\bm{ y}}_{-n}$, ${\bm y}_{-n}={\bm{ y}}-({\bm{U_0}}{\bm{\bar H}}{\bm{\Theta}})^{\rm H}\sum_{j \neq n}a(\widehat\omega_j+\Delta\phi_{j}){x}^{post}_{B,j}, {\bm{ a}}'(\widehat\omega_n+\Delta\phi_{n})={\rm d}{\bm{ a}}(\widehat\omega_n+\Delta\phi_{n})/{\rm d}\Delta\phi_{n}$. 
	
	Then we can update the off-grid offsets as 
	\begin{equation}\label{eq_update_phix}
		\displaystyle \Delta\bm{ \phi}^{i+1}= \Delta\bm{\phi}^{i} + \alpha_i\bm{ \psi}^{(i)}_{\Delta\bm{ \phi}}
	\end{equation}
 
 To obtain high-precision estimation results, we utilize the backtracking linear search method, which can avoid oscillating results or slow convergence by adaptively adjusting the iteration stepsize. This completes the proof.

 \subsection{Proof of Lemma \ref{lem_beamformer}}
 Define the matrix $\bm{\Lambda}=\bm{ h}^{\rm H}\bm{\Theta}\bm{H}$. Then, we can rewrite ${ SNR}_u$ as
	\begin{equation}
		{SNR}_u = P_bG\bm{v}^{\rm H}\bm{\Lambda}^{\rm H}\bm{\Lambda}\bm{ v}/\sigma_0^2
	\end{equation}

	According to (\ref{eq_H_matrix}) and (\ref{eq_h_multipath}), $\bm{\Lambda}^{\rm H}\bm{\Lambda}$ is a rank-1 martix with the following structure
	\begin{equation}
		\bm{\Lambda}^{\rm H}\bm{\Lambda}=D\bm{a}_{\rm BS}(\upsilon_{{\rm AoD}})\bm{ a}_{\rm BS}(\upsilon_{{\rm AoD}})^{\rm H}
	\end{equation}
	where $D$ is a positive constant. 
	
	The non-zero eigenpair of $\bm{\Lambda}^{\rm H}\bm{\Lambda}$ can then be expressed as 
	\begin{equation}
		\displaystyle  (E_{val}, E_{vec})=(D||\bm{a}_{\rm BS}(\upsilon_{{\rm AoD}})||^2_2, \frac{\bm{a}_{\rm BS}(\upsilon_{{\rm AoD}})}{||\bm{a}_{\rm BS}(\upsilon_{{\rm AoD}})||_2})
	\end{equation}
	
	Owing to the unit magnitude of $\pmb{ v}$ and the Rayleigh-Ritz theorem, $\pmb{ v}$ can be calculated by 
	\begin{equation}
		\displaystyle \bm{ v}=\frac{\bm{a}_{\rm BS}(\upsilon_{{\rm AoD}})}{||\bm{a}_{\rm BS}(\upsilon_{{\rm AoD}})||_2}
	\end{equation}
 
This completes the proof.

 \subsection{Proof of Lemma \ref{cal_pb_bu}}
{To tackle this issue,} we propose to decompose (\ref{eq:BS_layer_opt}) into two subproblems, named \emph{BS transmit power control} subproblem and \emph{BS blocklength optimization} subproblem. 
	
	\textbf{1-a) BS transmit power control:}
	Given $b_u$, ${P}_u$ and $\{b_k\}$, the BS power control subproblem can be reduced to
	\begin{subequations}\label{eq:origin_BS_power}
		\begin{alignat}{2}
			&  \max \limits_{P_b}  \text{ } \min \limits_{k \in {\mathcal K}} \text{ } \displaystyle   \frac{R_u(1-\epsilon_u)+R_k(1-\epsilon_k)}{{\bar p}_b+{\bar p}_u}  \\
			& {{\rm s.t:}} { \text{ } 0 < P_b\leq P_{B},\text{ } \Omega(SNR_u) \le \varepsilon_u^{\rm th}}
		\end{alignat}
	\end{subequations}
	
	Recall that $\Omega({SNR}_u)$ is a piecewise function. Thus, we need to discuss the power control subproblem in three different cases closely related to the value of ${SNR}_u$. We can directly exclude the case of ${SNR}_u\leq {\gamma_u}-\frac{1}{{\chi_u}}$ as it will result in an unacceptable DEP. 
	Then, we discuss the remaining two cases, respectively. 
	
	\textbf{Case I:} ($SNR_u \ge SNR_{\rm up}$) In this case, we have $\varepsilon_u = 0$, and the BS power control subproblem can be reformulated as 
	\begin{subequations}\label{eq:case1_power_control}
		\begin{alignat}{2}
			&  \max \limits_{P_b} \text{ }  \min  \limits_{k \in {\mathcal K}} \text{ } \displaystyle   \frac{R_u+R_k(1-\epsilon_k)}{{\bar p}_b+{\bar p}_u}  \\
			& {\rm s.t:} \text{ } \displaystyle  {\bar p}_b\geq  \frac{SNR_{\rm up} }{P_B\Delta B}, {\text{ } 0 < P_b\leq P_{B}.}
		\end{alignat}
	\end{subequations}
	where $\Delta B=G|\bm{ h}^{\rm H}\bm{\Theta} \bm{Hv}|^2/\sigma_0^2$. As the numerator of (\ref{eq:case1_power_control}a) is not related to ${\bar p}_u$, the objective function of (\ref{eq:case1_power_control}) monotonically decreases with ${P}_b$. Thus, given $b_k$, the optimized BS transmit power (denoted by $P_b(b_k)$) can be computed by 
	\begin{equation}\label{eq_optimal_BS_power}
		\displaystyle  P_b(b_u)={\rm \min} \left\{P_B, \frac{SNR_{\rm up}}{\Delta B} \right\}
	\end{equation}
	
	\textbf{Case II:} ($SNR_{\rm low} < SNR_{u} < SNR_{\rm up}$) In this case, We can reformulate (\ref{eq:origin_BS_power}) as
	\begin{subequations}\label{eq:case2_BS_power}
		\begin{alignat}{2}
			&  \max \limits_{P_b} \text{ }  \min  \limits_{k \in {\mathcal K}} \text{ } \displaystyle   \frac{R_u(1-\epsilon_u)+R_k(1-\epsilon_k)}{{\bar p}_b+{\bar p}_u}  \\
			& {\rm s.t:} \text{ }\displaystyle  \frac{SNR_{\rm low}}{P_B\Delta B} \leq {\bar p}_b\leq  \frac{SNR_{\rm up} }{P_B\Delta B}, {\text{ } 0 < P_b\leq P_{B}.}
		\end{alignat}
	\end{subequations}
	
	Generally, the monotonicity of (\ref{eq:case2_BS_power}a) is hard to be determined. Fortunately, we observe that the feasible region ${\mathcal P}_b$ of (\ref{eq:case2_BS_power}) is rather narrow, in which the numerator of (\ref{eq:case2_BS_power}a) varies greatly, while the denominator remains almost unchanged. Therefore, the monotonicity of (\ref{eq:case2_BS_power}a) is dominated by its numerator. The numerator is then a monotonically increasing function in ${\mathcal P}_b$, and so is (\ref{eq:case2_BS_power}a). It further indicates that the two discussed cases have the same solution that is given in (\ref{eq_optimal_BS_power}). 
	
	\textbf{1-b) BS blocklength optimization problem:}
	Given $P_b(b_u)$, $P_u$, and $\{b_k\}$, we can formulate the BS blocklength optimization problem as
{\begin{subequations}\label{eq:BS_blocklength_opt}
		\begin{alignat}{2}
			&  \max \limits_{b_u}  \text{ } \min \limits_{k \in {\mathcal K}} \displaystyle \frac{R_u(1-\epsilon_u)+R_k(1-\epsilon_k)}{P_b(b_B)/P_B+\overline{p}_u}  \\
			& {\rm s.t:} {\text{ } B_{u}^{\min} \leq b_u \leq B_{u}^{\max}, \text{ } b_u \in \mathbb{Z^+},\text{ } \Omega(SNR_u) \le \varepsilon_u^{\rm th}.}
		\end{alignat}
\end{subequations}}
	
	Note that the obtained $P_b(b_u)$ is a function of $b_u$. After substituting $P_b(b_u)$ into (\ref{eq:BS_blocklength_opt}), we can obtain an objective function regarding variable $b_u$. However, the obtained objective function is non-concave even given a relaxed variable $b_u$. Fortunately, considering that the feasible region of (\ref{eq:BS_blocklength_opt}) is small, it will be a good choice to achieve its optimal solution by exhaustively searching.
	
	\textbf{Remark 2:} Observe that the optimized $P_b(b_u)$ is a function $b_u$. After obtaining $b_u$ using an exhaustive search method, we can directly obtain the optimized $P_b(b_u)$ by (\ref{eq_optimal_BS_power}). It indicates that we can solve the \emph{BS-Layer Optimization} problem in just two iterations. 
This completes the proof.

\subsection{Proof of Lemma \ref{lem_UAV_transmit_power}}
The objective function of (\ref{eq:UAV_power_control}) is a complex fraction, which makes it hard to be directly optimized. To address this issue, we define the optimal $\eta_{\rm EE}^{\star}$ as
	\begin{equation}\label{eq_}
		\displaystyle \eta_{\rm EE}^{\star}=\min  \limits_{k \in {\mathcal K}} \text{ }  \displaystyle \frac{R_u(1-\epsilon_u)+R_k({\bar p}_u^{\star})(1-{\epsilon}_k({\bar p}_u^{\star}))}{{\bar p}_b+{\bar p}_u^{\star}}
	\end{equation}
	where ${\bar p}_u^{\star} = P_u^{\star}/{P_U}$ is the optimal solution of (\ref{eq:UAV_power_control}). Define ${\mathcal P}$ as the feasible region of (\ref{eq:UAV_power_control}), and $\eta_{\rm EE} = \min  \limits_{k \in {\mathcal K}} \text{ } \displaystyle  \frac{R_u(1-\epsilon_u)+R_k(1-\epsilon_k)}{{\bar p}_b+{\bar p}_u} = \min  \limits_{k \in {\mathcal K}} \text{ }  \frac{V_k(\bar p_u)}{S(\bar p_u)}$. Denote by $\eta_{\rm EE}^{\star}$ and $(\bar p_u^{\star} P_U) \in {\mathcal P}$ the optimal energy efficiency and the optimal UAV transmit power, respectively. Then, we have
	\begin{equation}\label{eq_}
		\displaystyle \eta_{\rm EE}^{\star}=  \displaystyle \frac{\min  \limits_{k \in {\mathcal K}}  V_k(\bar p_u^{\star})}{S(\bar p_u^{\star})} \ge  \displaystyle \frac{\min  \limits_{k \in {\mathcal K}}  V_k(\bar p_u)}{S(\bar p_u)}
	\end{equation}
	
	Accordingly, the following equation holds
	\begin{equation}\label{eq_}
		\displaystyle 
		\max  \limits_{P_u} \text{ } {\min  \limits_{k \in {\mathcal K}} \text{ } V_k(\bar p_u)} - \eta_{\rm EE}^{\star}{S(\bar p_u)} = 0
	\end{equation}
	
	On the contrary, suppose that $(\bar p_u^{\star}P_U)$ is the optimal UAV transmit power of the following optimization problem
	\begin{subequations}\label{eq:equivalent_UAV_power_control}
		\begin{alignat}{2}
			&  \max \limits_{P_u^\prime } \text{ } \min  \limits_{k \in {\mathcal K}} \text{ }   \displaystyle \text{ }  V_k(P_u^\prime/{P_U})- \eta_{\rm EE}^{\star}S(P_u^\prime/{P_U})  \\
			& {\rm s.t:}  {\text{ } 0 < P_u \leq P_{U}}\\ 
			& {\rm Constraint \text{ } (\ref{eq:formualted_problem}c).}
		\end{alignat}
	\end{subequations}
	and the following equation holds
	\begin{equation}\label{eq_}
		\displaystyle {\min  \limits_{k \in {\mathcal K}} \text{ } V_k(\bar p_u^{\star})} - \eta_{\rm EE}^{\star}{S(\bar p_u^{\star})} = 0
	\end{equation}
	
	As (\ref{eq:UAV_power_control}) and (\ref{eq:equivalent_UAV_power_control}) have the same feasible region ${\mathcal P}$, for any UAV transmit power $P_u^\prime \in {\mathcal P}$, the following inequality holds
	\begin{equation}\label{eq_contrary_proof}
		\begin{array}{l}
			\mathop {\min }\limits_{k \in {\mathcal K}} \text{ } {\rm{ }}{V_k}({P_u^\prime}/{P_U}) - \eta _{{\rm{EE}}}^ \star S({P_u^\prime}/{P_U})\\
			\le \mathop {\min }\limits_{k \in {\mathcal K}} \text{ } {\rm{ }}{V_k}(\bar p_u^ \star ) - \eta _{{\rm{EE}}}^ \star S(\bar p_u^ \star )\\
			= 0
		\end{array}
	\end{equation}
	
	From (\ref{eq_contrary_proof}), we obtain $\eta _{{\rm{EE}}}^ \star  = \frac{{\mathop {\min }\limits_{k \in {\mathcal K}}  {\rm{ }}{V_k}(\bar p_u^ \star )}}{{S(\bar p_u^ \star )}} \ge \frac{{\mathop {\min }\limits_{k \in {\mathcal K}}  {\rm{ }}{V_k}(P_u^{\prime}/{P_U})}}{{S({P_u^{\prime}}/{P_U})}}$. Then we can conclude that (\ref{eq:UAV_power_control}) and (\ref{eq:equivalent_UAV_power_control}) have the same optimal UAV transmit power.  {\rm $\eta_{\rm EE}^{\star}$ can be achieved if and only if} 
		\begin{equation}\label{eq_equivalent_expression}
			\begin{array}{l}
				\mathop {\max }\limits_{{P_u}} \mathop {\min } \limits_{k \in {\mathcal K}} \text{ } R_u(1-\epsilon_u) + {R_k}(1 - \epsilon{_k}) - {\eta_{\rm EE}^{\star} }\left( {{{\bar p}_b} + {{\bar p}_u}} \right)\\
				= \mathop {\min }\limits_{k \in {\mathcal K}} \text{ }  R_u(1-\epsilon_u) + {R_k}({{\bar p}_u^{\star}})(1 - \epsilon{_k}({{\bar p}_u^{\star}})) - {\eta_{\rm EE} ^{\star}}\left( {{{\bar p}_b} + {{\bar p}_u^{\star}}} \right) \\
				= 0
			\end{array}
		\end{equation}
	
	{Following the conclusion in (\ref{eq_equivalent_expression}),} we can transform the expression of the objective function of (\ref{eq:UAV_power_control}). Specifically, (\ref{eq:UAV_power_control}) can be equivalently transformed into
	\begin{subequations}\label{eq:reformu_UAV_power_control}
		\begin{alignat}{2}
			&  \max \limits_{P_u}  \min  \limits_{k \in {\mathcal K}} \text{ } \displaystyle  R_u(1-\epsilon_u) + R_k(1-{\epsilon}_k)-\eta_{\rm EE}({{\bar p}_b+{\bar p}_u})  \\
			& {\rm s.t:} \text{ } \displaystyle {\bar p}_u \geq \frac{{2\gamma_k} \sigma_{k}^2}{|g_k^L|^2 \varepsilon^{\rm th}_kP_U},\text{ }\forall k\in {\mathcal K},\text{ }{0 < P_u \leq P_{U}.}
		\end{alignat}
	\end{subequations}
	
	Note that (\ref{eq:reformu_UAV_power_control}) is a typical max-min optimization problem. Then, we introduce an auxiliary variable $y$ and equivalently transform (\ref{eq:reformu_UAV_power_control}) into
\begin{subequations}\label{eq:equi_reform_UAV_power_control}
		\begin{alignat}{2}
			&  \max \limits_{P_u,y} \text{ } y  \\
			& {\rm subject \text{ } to:} \nonumber \\
			& \displaystyle \frac{{2\gamma_k} \sigma_{k}^2}{|g_k^L|^2 {\bar p}_uP_U} \leq 1+\frac{R_u(1-\epsilon_u)-y-\eta_{\rm EE}({{\bar p}_b+{\bar p}_u})}{R_k},\forall k\in {\mathcal K} \\
			& {\rm Constraint \text{ }  (\ref{eq:reformu_UAV_power_control}b).}
		\end{alignat}
	\end{subequations}
	
	It can be confirmed that (\ref{eq:equi_reform_UAV_power_control}) is a rotated quadratic cone programming problem that can be efficiently optimized by some commercial optimization tools such as MOSEK. 
	Therefore, we can conclude the steps of optimizing (\ref{eq:UAV_power_control}) in Algorithm \ref{alg_UAV_power_control}. 
This completes the proof.

\subsection{Discussion on Algorithm Complexity}
Here, we discuss the computational complexity of Algorithm \ref{alg_PTPB} in detail, which consists of three contributors: 
1) \emph{Channel estimation}: R-OAMP includes a module A, a module B, and a parameter updating module. Module A is an LMMSE estimator with a complexity of $O(PN)$.
Module B is an MMSE estimator where a sum-product approach is leveraged, which thus has a low computational complexity of $O(N)$. 
The complexity of updating offsets is $O(PN^2)$.
Considering that the above modules should be iteratively executed for no more than $I_c^{\max}$ times. The computational complexity of R-OAMP is $O(f_1) = O(I_c^{\max}(PN+N+PN^2))$ in the worst-case. However, empirical evidence shows that R-OAMP usually
converges in 30 iterations.
2) \emph{RIS phase shift optimization}: 
The complexity of solving (\ref{eq:optimize_RIS}) using an interior-point method dominates the complexity of this contributor and is $O((1+L)^{3.5})$ \cite{DBLP:books/daglib/0094154}.
3) \emph{Joint power and blocklength optimization}: 
This joint optimization model includes the BS-Layer optimization and the UAV-Layer optimization. As the iterative method in BS-Layer optimization will converge in two iterations, the complexity of BS-Layer optimization is dominated by the exhaustive search approach and is $O(f_2) = O((B_u^{\max} - B_u^{\min})S_1 + \log_2(B_u^{\max} - B_u^{\min}))$, where $S_1$ is the complexity of computing (\ref{eq:BS_blocklength_opt}a). 
For UAV-Layer optimization, it needs to iteratively optimize (\ref{eq:UAV_power_control}) and (\ref{eq:UAV_block_opt}) for no more than $c_{\max}$ iterations. The complexities of optimizing UAV transmit power and blocklength are $O(r_{\max}2^{3.5})$ and $O((B^{\max} - B^{\min})S_2 + \log_2(B^{\max} - B^{\min}))$ per iteration, where $S_2$ is the complexity of computing (\ref{eq:UAV_block_opt}a). Thus, the complexity of UAV-Layer optimization is $O(f_3) = O(c_{\max}(r_{\max}2^{3.5} + (B^{\max} - B^{\min})S_2 + \log_2(B^{\max} - B^{\min}) ))$ in the worst-case.
Then, the total computational complexity of is $O(f_1 + I_{\max}f_2 + (1+L)^{3.5} + f_3)$ in the worst-case.

\end{appendix}

	\bibliographystyle{IEEEtran}
	\bibliography{hap_urllc}

\end{document}